\newtheorem{theorem}{Theorem}[section]
\newtheorem{definition}{Definition}[section]
\newtheorem{claim}{Claim}[section]
\newtheorem{lemma}{Lemma}[section]
\newtheorem{corollary}{Corollary}[section]
\newtheorem{observation}{Observation}[section]
\newtheorem{proposition}{Proposition}[section]
\newcommand{\qed}{\hfill \mbox{\raggedright \rule{2mm}{3mm}}}
\newenvironment{proof}{\noindent{\bf Proof.}}{\qed}
\newcommand{\cfl}{{\sc Cfl}}
\newcommand{\opname}{\operatorname}
\newcommand{\opn}{\operatorname}
\newcommand{\zeone}{$0$-$1$ }
\newcommand{\noi}{\noindent} 
\date{}
\begin{document}

\title{
Extended Formulation Lower Bounds via Hypergraph Coloring?\thanks{
This research has been co-financed by the European Union (European
Social Fund -- ESF) and Greek national funds through the Operational
Program ``Education and Lifelong Learning'' of the National Strategic
Reference Framework (NSRF) - Research Funding Program:
``Thalis. Investing in knowledge society through the European Social Fund''.}
}

\author{Stavros G. Kolliopoulos\thanks{Department of Informatics and
Telecommunications, National and Kapodistrian 
University of Athens, Panepistimiopolis Ilissia, Athens
157 84, Greece; (\texttt{sgk@di.uoa.gr}).}   
\and Yannis Moysoglou\thanks{ 
Department of Informatics and
Telecommunications, National and Kapodistrian 
University of Athens, Panepistimiopolis Ilissia, Athens
157 84, Greece; (\texttt{gmoys@di.uoa.gr}). } }
\maketitle

\thispagestyle{empty}
\maketitle

\begin{abstract}
Exploring   the  power   of  linear   programming   for  combinatorial
optimization problems has been recently receiving renewed attention 
after a series of breakthrough impossibility results. 
From an algorithmic perspective, the
related questions concern whether there are compact formulations 
even for problems that are known to admit polynomial-time algorithms.

We propose a  framework for proving lower bounds on the
size of extended formulations. We do so by introducing a specific type
of extended relaxations that we call {\em product relaxations}  and is motivated
by the study of the Sherali-Adams (SA) hierarchy.  Then we show that for
every  approximate   relaxation  of  a   polytope  $P,$  there   is  a
product relaxation that  has the same size and  is at least as
strong. We provide a methodology for proving lower bounds on the size of approximate 
product relaxations by lower bounding the chromatic number of an underlying hypergraph,
whose vertices correspond to gap-inducing vectors.

We  extend  the  definition  of  product  relaxations  and  our
methodology to mixed integer sets. However in this  case we are able
to show that {\em mixed product relaxations} are at least as powerful
as  a special family  of extended  formulations. 
As  an application  of our
method we show an exponential lower bound on the size of approximate
mixed product relaxations for the metric capacitated facility
location problem (\cfl), a problem which seems to be intractable for linear programming as far as constant-gap compact formulations are concerned.
Our lower bound 
implies an unbounded integrality gap for \cfl\ at $\Theta({N})$
levels of the
{\sl universal} SA hierarchy which 
 is {\sl independent of the
starting relaxation;} we only require that the starting
relaxation has size $2^{o(N)}$, where $N$ is the number of facilities
in  the instance. This proof yields the  first such tradeoff 
for an SA procedure  that is independent of 
the initial relaxation.
\end{abstract}

\section{Introduction}

In  the  past few  years  there has  been  an  increasing interest  in
exposing the limitations of  compact LP formulations
for combinatorial optimization problems. The goal  is
to  show    a lower  bound   on  the  size  of  {\em extended
formulations (EFs)} for a particular  problem.  
Extended formulations add extra variables to the natural problem space;
the increase in dimension may yield a smaller number of facets. 
The minimum size over all extended formulations is the {\em extension complexity} of
the corresponding polytope. 
A superpolynomial lower bound on the extension complexity
is  of intrinsic interest
in polyhedral combinatorics and   implies  that there is no 
polynomial-time  algorithm relying  purely  on the  solution  of a  compact
linear  program. It  does not 
however rule out efficient LP-based algorithms that combine
algorithmic steps of arbitrary type, such as
preprocessing, primal-dual, etc., with linear programming.

In the seminal paper  of Yannakakis \cite{Yannakakis91} the problem of
lower bounding the size of  extended formulations  was considered for
the  first time: exponential  lower bounds  were proved  for symmetric
extended formulations of the matching and TSP polytopes.  Yannakakis 
\cite{Yannakakis91} identified also a crucial combinatorial parameter, the
nonnegative rank of the slack matrix of the underlying polytope $P,$ and he
showed that it equals the extension complexity of $P.$ 
A strong connection of the extension complexity of a
polytope to communication  complexity was made in \cite{Yannakakis91},
by 
showing that the nonnegative rank of the slack matrix is at least 
the size of its 
minimum rectangle cover. That connection has been 
exploited   in several  results on
the extension complexity of polytopes.

Fiorini  et al.  \cite{FioriniMPTD12}  lifted the  symmetry condition  on
the result of \cite{Yannakakis91} regarding the TSP polytope,
thus answering a  long-standing open problem. The  result was obtained
by  showing that  the correlation  polytope has  exponential extension
complexity which in  turn was shown using communication complexity
tools. Recently, Rothvo{\ss} \cite{Rothvoss14} removed the symmetry condition
for the matching  polytope as well, answering  the second
long-standing open question of \cite{Yannakakis91}. This was done by a
breakthrough in  bounding a refined version of  the rectangle covering
number.

A more  general question is that  of the size  of approximate extended
formulations. This  problem was first  considered in \cite{BraunFPS12}
where  the   methodology  of  \cite{FioriniMPTD12}   was  extended  to
approximate  formulations  and an  exponential  bound  for the  linear
encoding of the  $n^{1/2-\varepsilon}$-approximate clique problem was
given.  Subsequently,  Braverman   and  Moitra  \cite{BravermanM13}
extended   the  former   bound   to  $n^{1-\varepsilon}$--approximate
formulations  of the  clique,  following a  new, information  theoretic,
approach. Braun  and Pokutta  in \cite{BraunP13} further strengthened the lower
bounds by  introducing the notion of common  information. Very recently,
Braun   and   Pokutta   \cite{BraunP14}   extended   the   result   of
\cite{Rothvoss14} to approximate  formulations of the matching polytope
by combining ideas of the latter with the notion of common information.

 In \cite{ChanLRS13} it was 
proved that in terms of approximating
  maximum constraint satisfaction problems, LPs of size $O(n^k)$ are
exactly as powerful as $O(k)$-level  relaxations in the  Sherali-Adams
hierarchy. Their proof
differs from previous work  in showing that polynomials of low degree
can 
approximate the functional version of the factorization 
theorem of \cite{Yannakakis91}.

The  {\em metric capacitated  facility location}  problem (\cfl)  is a
well-studied problem for which, while constant-factor approximations 
are known \cite{BansalGG12,AggarwalLBGGGJ13}, 
no efficient LP relaxation with constant integrality gap is
known. An instance $I$ of \cfl\ is defined as follows.  We are given a set
$F$ of facilities and a set $C$  of clients, with each facility $i$ having a capacity $U_i$ and each client $j$ having a demand $d_j>0$. We may open facility $i$ by
paying its opening cost $f_i$ and we may assign demand from client $j$ to facility
$i$ by  paying the connection  cost $c_{ij}$ per unit of demand. The 
latter costs satisfy the following variant of the triangle inequality:
$c_{ij} \leq c_{ij'} + c_{i'j'} + c_{i'j}$ for any $i, i'\in F$ and
$j, j' \in C.$ We  are asked to  open a subset $F' \subseteq F$ of the facilities 
and assign all the client demand to the open  facilities while 
respecting the capacities.  The  goal  is  to  
minimize  the  total  opening  and connection cost. 
The question whether  an efficient relaxation exists for \cfl\ is among the most
important     open     problems     in    approximation     algorithms
\cite{ShmoysWbook}. In a recent breakthrough,  the first $O(1)$-factor 
LP-based algorithm for \cfl\ 
was given in \cite{DBLP:conf/focs/AnSS14}. 
The proposed relaxation is, however,  exponential in size and,
according to the authors of \cite{DBLP:conf/focs/AnSS14}, not known to be separable in polynomial
time. To our knowledge, there has not been a  compact EF  for approximate 
\cfl\ achieving even an 
$o(|F|)$ gap.

In previous work 
\cite{KolliopoulosM13,KolliopoulosM14}, we proved
among other results for \cfl, unbounded integrality gaps for the Sherali-Adams
hierarchy when starting from the natural LP relaxation. We also
disqualified, with respect to obtaining a  $O(1)$ gap, 
 valid inequalities from  the literature such as the
flow-cover inequalities and their generalizations.

\subsection{Our contribution}

In this  paper we propose a  new approach for proving  lower bounds on
the  size of  approximate extended  formulations. Our contribution is
summarized by the following.

First  we  introduce a  family  of  extended  relaxations of  a  given
polytope   which  we   call  \emph{product   relaxations}.  The
product  relaxations   are  inspired   by  the  study   of  the
Sherali-Adams hierarchy. 
Given a polytope $K \subseteq [0,1]^d$ 
that corresponds to a linear relaxation of the
problem at hand, the Sherali-Adams relaxation $\opn{SA}^t(K)$ 
at level $t$ is produced by a lift-and-project
method, where initially every constraint in the description of $K$ is
multiplied with all $t$-subsets of variables and their complements. 
The resulting products of variables are then
linearized, i.e., each replaced by  a single variable, and finally one
projects back to the original variable space of $K.$ 
The variable space of the product relaxations is exactly the space of the 
final $d$-level 
Sherali-Adams relaxation, after linearization and 
before projection.  The
variables have the intuitive meaning of corresponding to products over sets of
variables from the original space  -- the "intuitive meaning" of
a variable is made precise through the notion of the \emph{section}
$f$  of an extended  relaxation $Q(x,y)$ of a polytope $P(x)$, a
function $f$ that maps an integer point $x \in P(x)$ to a vector  of
values $(x,y)=f(x)$ 
such that $f(x) \in Q(x,y)$. (See Section~\ref{sec:prelim} for the
necessary definitions). 

We prove in Theorem~\ref{theorem:ext1} 
that for any $\rho$-approximate
extended  formulation   of  a  \zeone  polytope  there   is  a  product
relaxation of the same size that is at least as strong. The proof is
short and accessible. 
 Theorem~\ref{theorem:ext1}  reduces
  lower bounding  the  size of  an extended formulation, which  uses
  some unknown space and encoding,   of a
polytope $P,$  to  lower bounding
the size of  product relaxations of $P.$  In the 
product space we have the concrete  advantage of
  knowing the section of the target relaxation. 
We  extend  the  definition  of  product  relaxations  and  our
methodology to mixed integer sets. However in this  case we are able
to show that {\em mixed product relaxations} are at least as powerful
as  a special family  of extended  formulations 
(cf. Theorem~\ref{theorem:ext2}).

We note that our approach does not rely on the notion of 
the slack matrix introduced by Yannakakis \cite{Yannakakis91}. It 
differs  from that of \cite{ChanLRS13} in which 
the slack functions of the factorization theorem \cite{Yannakakis91}
were shown to be approximable, for max CSPs, by 
low-degree polynomials and thus   SA gaps   
are transferred to general  linear programs. 

Then we use a methodology for proving lower bounds for relaxations
for  which  the section  is  known  and in  particular  for
product relaxations.  Similar arguments have been used in the context
of bounding the number of facets of specific polyhedra, but prior to our work,
they seemed inapplicable for lower bounding the size of arbitrary EFs
which lift the polytope in   arbitrary variable spaces. 
The method  is the following: first define a set
of vectors in the space of the relaxation such that for each one of those vectors there is an admissible objective function witnessing an integrality gap of $\rho$.  We call that set of vectors the {\em core.} Then show
that, for any partition of the core into fewer than $\kappa$ parts,
there must be some part containing a set of conflicting vectors. A set of
infeasible vectors is {\em conflicting} if its convex hull has nonempty intersection
with the convex hull of $\{ f(x) \mid x \in P(x)\cap \{0,1\}^n \}$,
 which is always   included  in   the  feasible   region  of   a  product
relaxation -- here $f(x)$ is the section we associate with 
 product relaxations. Thus, we  get that at least $\kappa$  inequalities are needed
to separate  the members of  the core from  the feasible region  and so
$\kappa$ is a lower bound on the size of any $\rho$-approximate
product relaxation. By considering the hypergraph whose set of
vertices corresponds  to the aforementioned  set of vectors  and whose
set of hyperedges corresponds to  the sets of conflicting vectors, 
the chromatic number of the hypergraph is a lower bound on the size of
every $\rho$-approximate extended formulation
(cf. Theorem~\ref{theorem:hyper}).
Moreover, there is always a core such that the chromatic number of the
resulting, possibly infinite, 
hypergraph equals the extension complexity of the polytope
at hand. Thus the characterization of extension complexity in 
Theorem~\ref{theorem:hyper}
can be seen as an alternative to the nonnegative rank of the slack matrix.
The conflicting vectors are fractional solutions, which are hard to
separate from the integer solutions. The method comes closer to
standard LP/SDP integrality gap arguments than the existing 
combinatorial approaches for lower bounding extension complexity.

When arguing about  the polyhedral complexity of a specific polytope, 
i.e., the minimum size of its formulation in the original variable space, 
the above method can always
be  simplified to  finding a  set of  gap--inducing vectors  with the
property that (almost) any pair  of them are conflicting. The underlying
hypergraph reduces then  to a simple graph that  is very dense, almost
a clique, and thus has high chromatic number. 
We used this idea in a preliminary version of this work \cite{KolliopoulosM14b} 
to derive  exponential bounds on the polyhedral
     complexity    of    approximate  metric     capacitated    facility
location, where only the classic
variables are used (cf. Corollary \ref{cor:natural}). 
A similar idea was
independently used by Kaibel and Weltge in \cite{KaibelW14a} to derive
lower bounds on the number of  facets of a polyhedron which contains a
given integer set  $X$  and whose  set of integer  points is
$\operatorname{conv}(X)\cap \mathbb{Z}^d$.

We exhibit a concrete application of our 
methodology by proving in Theorem~\ref{theorem:mainCFL} an exponential lower bound on the
size  of  any   $O({N} )$-approximate mixed  product  relaxation  for  the
\cfl\ polytope, where $N$ is the
number of facilities in the  instance.
This result can be shown to
imply (cf. Theorem~\ref{cor:SA-CFL}) that the
$\Omega({N} )$-level SA relaxation for  \cfl, which is  obtained from any starting LP
of
  size $2^{o(N)}$ defined on the classic set of variables,    
has  unbounded gap $\Omega(N).$   Note, that it is well-known that
by  lifting only  the facility  variables, at  $N$ levels  the integer
polytope is obtained for \cfl\ \cite{BalasCC93}.   
This settles the open question
of \cite{AnBS13} whether there are LP relaxations upon which
the application of  lift-and-project methods captures  the strength of
preprocessing steps for \cfl. 
Our result establishes for the first time such a 
tradeoff for a {\sl universal} $\opn{SA}$
procedure that is independent of the starting relaxation $K.$
The proof follows the  methodology outlined above 
and is different from the standard arguments 
that apply  only  to  the  $\opn{SA}$ lifting  of  a specific  LP. 
Our earlier  $\opn{SA}$ construction  in \cite{KolliopoulosM14}
applied  the local-global method \cite{FernandezdlVKM07} that constructs  an appropriate  distribution of solutions  for each explicit constraint of the starting LP.

We leave as an open problem the extension of  the
equivalence between product and extended formulations 
from \zeone programs to mixed integer sets. 
We also believe that it would be of interest
to revisit  known extension-complexity lower bounds  using our
method, so as  to obtain simpler proofs.

\section{Preliminaries}  \label{sec:prelim}

$X \subseteq \mathbb{R}^d,$ is a {\em mixed integer set} if 
there is $p \in \{1, \ldots, d-1\}$
such that $d=n+p$  and $X \subseteq \{0, 1\}^n \times [0,1]^p.$ 
A {\em valid relaxation}  of the mixed integer set 
$X$ is any polyhedron $P$  such that $\opn{conv}(X) \subseteq P$.
 Given a valid relaxation $P$ of $X,$ such that $\opn{conv}(X) \cap ( \{0, 1\}^n \times [0,1]^p ) =  P \cap ( \{0, 1\}^n \times [0,1]^p)$, 
the \emph{level $k$ Sherali-Adams
($\opn{SA}$)  procedure}, $k\geq 1$,   is  as  follows \cite{SheraliA90}. 
Let $P$ be defined by 
the linear constraints $A{x}-{b}\leq 0.$ For every constraint $\pi(x)\leq 0$
of $P$, for every set of variables $U\subseteq \{x_i \mid i=1,\ldots,n\}$ such that
$|U|\leq k,$ and for every $W\subseteq U$, 
consider the {\em lifted} valid constraint:
$\pi(x)\prod_{x_i\in U-W}x_i \prod_{x_i\in W}(1-x_i)\leq 0$.
Linearize the system obtained this way by replacing (i) $x_i^2$ with
$x_i$ for all $i$ (ii) $\prod_{x_i\in I \subseteq [n]}x_i$
with  $x_I$  and (iii)  $x_k \prod_{x_i\in I \subseteq [n]}x_i,$ where
$k \in \{n+1, \ldots, d \}$ with $v_{Ik}.$ $\opn{SA}^k(P)$
is the projection of the resulting linear system onto the original
variables $\{x_1,\ldots,x_d\}.$ 
 We call $\opn{SA}^k(P)$ the relaxation {\em obtained from $P$ 
at level $k$} of the SA
hierarchy. 
It is well-known that $\opn{SA}^n(P) = \opn{conv}(X)$
(see, e.g., \cite{BalasCC93}). If $X$ is a \zeone set, i.e.,  $X \subseteq
\{0, 1\}^d,$ the above definitions hold mutatis mutandis and
$\opn{SA}^d(P) =  \opn{conv}(X).$

Given a polyhedron  $K(x,y) =  \{ (x,y) \in \mathbb{R}^d  \times
\mathbb{R}^{d_y} \mid Ax + By\leq b \}$ 
the {\em projection to the $x$-space}  is defined as  $\{x
\in \mathbb{R}^d \mid \exists y \in \mathbb{R}^{d_y} \colon  Ax + By\leq b
\}$ and is denoted as $\opname{proj_x}(K(x,y)).$
An {\em extended formulation (relaxation)} of a polyhedron $P(x) \subseteq \mathbb{R}^d$ is
a linear system  $K(x,y) =  \{ (x,y) \in \mathbb{R}^d  \times
\mathbb{R}^{d_y} \mid Ax + By \leq b \}$  such that
$\opname{proj_x}(K(x,y)) = P(x)$ ($\opname{proj_x}(K(x,y)) \supseteq P(x)$).
The {\em size} of a polyhedron is
the minimum number of inequalities in its halfspace description. The \emph{extension complexity}
of a polyhedron $P(x)$ is the minimum size of an extended formulation of $P(x)$.

We define now   $\rho$-approximate
formulations  as in  \cite{BraunFPS12}. 
Given a combinatorial optimization problem $T$, a {\em linear encoding} of
$T$ is  a pair $(L,O)$  where $L \subseteq  \{0, 1\}^*$ is the  set of
{\em feasible solutions} to the problem and $O\subset \mathbb{R}^*$ is the
set of {\em admissible objective functions.}  An instance of the linear
encoding is  a pair $(d,w)$ where  $d$ is a  positive integer defining
the dimension of the instance  and $w\subseteq O\cap \mathbb{R}^d $ is
the  set  of admissible  cost  functions  for  instances of  dimension
$d$. Solving  the instance  $(d,w)$ means finding  $x \in L  \cap \{0,
1\}^d$ such that  $w^{T}x$ is either maximum or  minimum, according to
the type of  problem $T.$  Let $P=\operatorname{conv}(
\{x \in \{0, 1\}^d \mid x \in L \})$ be the corresponding 
\zeone  polytope of dimension $d$.
Given a linear encoding $(L,O)$ of a maximization problem,  the
corresponding polytope $P,$  and $\rho \geq
1$,  a  $\rho$-\emph{approximate  extended formulation  of $P$ } is an
extended  relaxation  $Ax+By\leq b$ of $P$  with $x\in \mathbb{R}^d, y \in \mathbb{R}^{d_y}$
such that

\begin{align*}
\max \{w^{T}x \mid Ax+By \leq b \} \geq & \max \{w^{T}x \mid x \in P\} &
  \mbox{ for all }  w \in \mathbb{R}^d \mbox{ and }\\
\max \{w^{T}x \mid Ax+By\leq b \} \leq  & \rho \max \{w^{T}x \mid x \in
  P\} & \mbox{ for all }  w \in O \cap \mathbb{R}^d.
\end{align*}
For a minimization problem,  we require 
\begin{align*}
\min \{w^{T}x \mid Ax+By\leq b \} \leq & \min \{w^{T}x \mid x \in P\}   
& \mbox{ for all }  w \in \mathbb{R}^d \mbox{ and }\\
\min \{w^{T}x \mid Ax+By\leq b \}  \geq &  \rho^{-1} \min \{w^{T}x \mid x \in
P\} & \mbox{ for all }  w \in O \cap \mathbb{R}^d.
\end{align*}

The $\rho$-\emph{approximate  extension complexity} of \zeone integer
polytope $P(x) \subseteq [0,1]^d$ is the minimum size of a
$\rho$-approximate  extended formulation of $P.$ 
Given an 
extended formulation $Q(x,y)$ of $P(x),$  
a  {\em section}
of $Q$ is defined as  a vector-valued boolean function $g(x) \colon \{0,1\}^d
\rightarrow \mathbb{R}^{d+d_y}$ such that for
$x \in P(x) \cap \{0,1\}^d,$ $g(x)$ belongs to $Q(x,y)$ and $\opname{proj_x}(g(x))=x.$ Intuitively, the
section extends the encoding of solutions to the auxiliary
variables $y.$ Clearly, if  a particular  extended formulation
$Q$ has  been specified a priori, different  such
functions can be defined by filling in the last $d_y$ coordinates of 
$g(x_o)$ with a value from   $\{ y  \in \mathbb{R}^{d_y} \mid Ax_o + By \leq b \}.$
 
\begin{definition}
Given a  \zeone  integer polytope $P(x) \subseteq [0,1]^d,$ 
a {\em product relaxation
$D(z)$ of $P(x)$}  is an extended relaxation $D(z)$ of $P(x),$ 
where $z \in \mathbb{R}^{2^{d}-1}$ and for every nonempty subset
${\mathcal{E}} \subseteq \{x_1, x_2, \ldots, x_{d} \}$ of the original variables,
we have a variable $z_{\mathcal{E}},$  (where  $z_{\{x_i\}}$ denotes
$x_i,$ $i=1,\ldots,d$), and there is a section $f(x)$ of $D$ s.t.
the  corresponding coordinate of $f$ at $\mathcal{E}$
is 
$f_{\mathcal{E}}(x) = \prod_{x_i \in \mathcal{E}}  x_i.$ We refer to this
function $f$ as {\em the product section}. 
\end{definition}

Let  $f$ denote the product section. Define the {\em
  canonical product relaxation of $P$} as 
$\hat{D} = \opname{conv} \{ f(x) \mid x \in P(x) \cap \{0,1\}^{d_x}
\}.$  The polytope $\hat{D}$ corresponds to the ``tightest'' possible
product relaxation. 

For a mixed integer set $M(x,w) \subseteq \{0,1\}^{d_x}
\times \mathbb{R}^{d_w}$ the corresponding mixed integer polytope
$P(x,w)$ is $\opname{conv}(M(x,w)).$ 
In case one starts from a mixed integer polytope, the additional $z$
variables of the product relaxation correspond to sets that
contain at most one fractional variable. Including only one fractional
variable in each product,  mimics the variable space of the
final-level SA relaxation. 

\begin{definition}
Let   $P(x,w) \subseteq [0,1]^{d_x}
\times \mathbb{R}^{d_w}$ be a mixed integer  polytope. A  {\em mixed product relaxation
$D(z)$ of $P(x,w)$}  is an extended relaxation $D(z)$ of $P(x,w),$ 
where $z \in \mathbb{R}^{(d_w+1)2^{d_x}-1 }$, with  $z_{\{w_j\}}=w_j$, $j=1,\ldots,d_w$, 
and \\ (i)  for every  set
$\varnothing \neq {\mathcal{E}} \subseteq \{x_1, x_2, \ldots, x_{d_x}
\}$
we define
$d_{w}+1$ variables: one that we denote 
$z_{\mathcal{E}}$ and, for each fractional variable $w_j,$
$j=1,\ldots,d_w,$ one that we denote $z_{\mathcal{E}w_j}.$
Moreover $z_{\{x_i\}}$ denotes $x_i$,  $i=1,\ldots,d_x$.   \\
(ii) there is a section $f(x,w)$ of $D$  s.t.
the corresponding coordinates of $f$ are
 $f_{\mathcal{E}} (x,w) = (\prod_{x_i  \in \mathcal{E}}  x_i)$ and, 
for each variable $w_j,$
$j=1,\ldots,d_w,$  $f_{\mathcal{E}w_j} (x,w)
 = (\prod_{x_i  \in \mathcal{E}}  x_i)\cdot w_j.$
We refer to this
function $f$ as {\em the mixed product section}. 
\end{definition}

\noi
 The {\em
  canonical product relaxation of $P(x,w)$} is similarly 
defined  as 
$\hat{D} = \opname{conv} \{ f(x,w) \mid (x,w)  \in P(x,w) \cap
\left (\{0,1\}^{d_x} \times \mathbb{R}^{d_w} \right ) \}.$

Note that the lifted polytope produced by the $d$-level ($d_x$-level) 
Sherali-Adams procedure
applied on some specific linear relaxation of the $0$-$1$ polytope $P(x)$
(mixed integer $P(x,w)$), after linearization and before projection to
the original variables, 
is a (mixed) product relaxation.


\section{The  expressive power of product relaxations}

In this section we show the following. For every \zeone polytope 
$P(x)$
and every (approximate) extended formulation 
$Q(x,y) = \{(x,y) \in \mathbb{R}^{d_x} \times \mathbb{R}^{d_{y}} \mid Ax + By \leq b \} $ 
of  $P(x)$    
there  is   a  product
relaxation $T[Q(x,y)]$ whose size is at most that of $Q(x,y)$ and 
is at least as strong.

   A 
\emph{substitution} $T$ is  a linear map of the form $y=T
z$  where  $T$  is  a  $d_y \times  (2^{d_x}-1)$  matrix  and  $
z$  is a  $2^{d_x}-1$ dimensional  vector having  a coordinate
$z_{\mathcal{E}}$ for each nonempty set $\mathcal{E}$ of the form $\{ x_i 
\mid i \in S \subseteq 2^{\{ 1,...,d_x \}} \}.$ For any substitution 
$T,$  
the {\em translation} 
of $Q(x,y),$ denoted ${T}[Q(x,y)],$ the formulation resulting
by 
substituting $T_{(i)} z,$ for $y_i,$ $i= 1,...,d_y.$ Here  $T_{(i)}$
denotes  the $i$th  row of $T.$  
If in addition ${T}[Q(x,y)]$ is a
product relaxation of $P(x)$ we say that it is a {\em translation of $Q$ to
  product relaxations} (recall that the original variables $x_i$ coincide with the variables $z_{\{x_i\}}$). 
Observe that the number of  inequalities of ${T}[Q(x,y)]$ is the same as
in $Q(x,y).$ 
The translation may heighten exponentially the dimension,
but  since our  methodology  will give lower  bounds  on the  size of  the
product relaxations 
those bounds apply
to the size of $Q(x,y)$ as well.  

\begin{theorem} \label{theorem:ext1}
Given a \zeone polytope 
$P(x) \subseteq [0,1]^{d_x},$ for every polytope $Q(x,y)$ such that
$P(x)\subseteq \opname{proj_x}(Q(x,y))$  
there is a translation ${T}[Q(x,y)]$ to product
relaxations such that
$P(x) \subseteq \opname{proj_x}( {T}[Q(x,y)]) \subseteq \opname{proj_x}(Q(x,y)).$ 
\end{theorem}

\begin{proof}

We shall give a substitution 
 $T$  for the variables $y \in \mathbb{R}^{d_y}$ 
of $Q(x,y)$ so that the theorem holds. 
Let $g(x)$ be a section of $Q(x,y)$ (recall that a section associates  every feasible \zeone
vector $x$ of $P(x)$ to a specific  $y$  such that $(x,y) \in Q(x,y)$).

Observe that the coordinates of the product section plus the constant $1$
 correspond exactly to the monomials of the Fourier basis. We
denote by $(p,1)\in \mathbb{R}^{n+1}$ the vector resulting from $p \in \mathbb{R}^n$ by
appending the scalar $1$ as an extra coordinate. 
By basic functional analysis (see, e.g.,
\cite{DBLP:books/daglib/0028687}), 
there is a $d_y\times 2^{d_x}$ matrix $A$ such that

\begin{equation}
g(x)=A\cdot(f(x),1) \label{eq:fourier}
\end{equation}

We  define the substitution $T$  by linearizing  the above equation; 
we replace  the sections $g$ and $f$ with the corresponding variable
vectors $y$ and $z$ (recall $z$ is the product vector) to obtain:

\begin{equation} \notag
y=A\cdot(z,1).
\end{equation}
 
Obviously
$\opname{proj_x} ( T[Q(x,y)] )  \subseteq \opname{proj_x} ( Q(x,y) )$:
from any  feasible solution  $(x_0, z_0)$ of $T[Q(x,y)]$ we  can derive  a feasible
solution $(x_0,y_0)$ of  $Q(x,y)$  by  setting $y_0$ equal to $Az_0.$

We will now show that  $ P(x) \subseteq \opname{proj_x} ( T[Q(x,y)]).$
It suffices to show that for every $x' \in P(x)\cap
\{0,1\}^{d_x}$ the vector $f(x')$ is feasible for $T[Q(x,y)]$ as 
required  by the  definition of  product relaxations.  Observe  that by
letting the $z$ vector take the  values $f(x'),$ by \eqref{eq:fourier} we
get that  the quantities involved  in the inequalities  of $T[Q(x,y)]$
are  the   exact  same   quantities  involved  in   the  corresponding
inequalities of  $Q(x,y)$ for $g(x')$.  But by the  definition of
section,  $g(x')$ is feasible  for $Q(x,y)$  and thus  $f(x')$ is
feasible for $T[Q(x,y)]$.

\end{proof}

\begin{corollary}  \label{cor:ext1}
A lower bound $b$ on the size of any product
relaxation  $D$ which is a $\rho$-approximate extended formulation of 
 the \zeone polytope $P(x)$, for $\rho \geq 1$, implies a lower bound $b$ on the size 
 of any $\rho$-approximate  extended  formulation $Q(x,y)$ of $P(x)$.
\end{corollary}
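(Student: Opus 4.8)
The plan is to argue by contradiction, converting any small extended formulation into an equally small product relaxation by means of Theorem~\ref{theorem:ext1}. Concretely, suppose toward a contradiction that some $\rho$-approximate extended formulation $Q(x,y)$ of $P(x)$ has size strictly less than $b$. I would apply Theorem~\ref{theorem:ext1} to produce the translation $T[Q(x,y)]$, which is a product relaxation of $P(x)$ satisfying the sandwich $P(x) \subseteq \opname{proj_x}(T[Q(x,y)]) \subseteq \opname{proj_x}(Q(x,y))$. As already observed in the text, the substitution $y = A\cdot(z,1)$ only rewrites each halfspace and introduces no new ones, so $T[Q(x,y)]$ has exactly the same number of inequalities as $Q(x,y)$, hence size strictly less than $b$. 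The size is therefore transferred verbatim, and the only remaining task is to check that the approximation quality survives the translation.

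This is where the sandwich inclusion does the real work. For the minimization version of the definition, the inclusion $\opname{proj_x}(T[Q(x,y)]) \subseteq \opname{proj_x}(Q(x,y))$ can only raise the optimum, so $\min\{w^{T} x \mid (x,z)\in T[Q(x,y)]\} \geq \min\{w^{T} x \mid (x,y)\in Q(x,y)\}$ for every $w$; chaining this with the lower approximation bound of $Q$ yields the required $\geq \rho^{-1}\min\{w^{T} x \mid x\in P\}$ for all admissible $w$. Symmetrically, $P(x) \subseteq \opname{proj_x}(T[Q(x,y)])$ gives that $\min$ over $T[Q(x,y)]$ is $\leq \min$ over $P$ for all $w$, which is precisely the first defining inequality of a $\rho$-approximate relaxation. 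The maximization case is handled identically with the inequalities reversed. Thus $T[Q(x,y)]$ is itself a $\rho$-approximate product relaxation of $P(x)$ of size strictly less than $b$, contradicting the hypothesis that $b$ lower bounds the size of every such product relaxation; the corollary follows.

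I do not anticipate a genuine obstacle here: the argument is essentially monotonicity of linear optimization with respect to set inclusion, bolted onto Theorem~\ref{theorem:ext1}. The only point demanding care is the bookkeeping of inequality directions, and in particular verifying that optimizing over the intermediate body $T[Q(x,y)]$ -- which lies between $P(x)$ and $\opname{proj_x}(Q(x,y))$ -- simultaneously preserves both the ``at least as strong as $P$'' side and the ``within factor $\rho$'' side of the definition. Implicitly one also uses that the objective $w^{T} x$ depends only on the original coordinates, so optimizing it over $T[Q(x,y)]$ coincides with optimizing over $\opname{proj_x}(T[Q(x,y)])$. Once these directions are checked in both the minimization and maximization cases, the contradiction with the assumed lower bound $b$ is immediate.
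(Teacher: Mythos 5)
Your proposal is correct and matches the paper's intent exactly: the paper states this corollary without a separate proof, treating it as immediate from Theorem~\ref{theorem:ext1} together with the observation (made just before that theorem) that the translation $T[Q(x,y)]$ has the same number of inequalities as $Q(x,y)$. Your additional check that the sandwich $P(x) \subseteq \opname{proj_x}(T[Q(x,y)]) \subseteq \opname{proj_x}(Q(x,y))$ preserves both sides of the $\rho$-approximation definition is the right (and only) detail to verify, and you verify it correctly.
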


\opt{short}{
}


Let $P(x,w)$ be a mixed integer polytope. The notion of the section 
  of $P$  for some  extended relaxation
$Q(x,w,y)$ of  $P$ is more challenging. Intuitively,  the solutions are
characterized by two parts --  a boolean part of the $0-1$ assignments
on the integer variables $x$ and a "linear" part of the real variables
$w$ in the following sense: once  the boolean part (the "hard" one) is
fixed, the linear part can be obtained as the feasible region of a
(usually small) system of inequalities, possibly empty.

Motivated  by the  above we  define the  following type  of sections
for  an  extended  formulation  $Q(x,w,y)$ of a mixed-integer
polytope. A {\em mixed-linear section } of EF $Q$ is a section $g$ for
which  
at  variable
$y_i$ the value $g_i(x',w)$ for a given 
integer vector $x'$ is an affine  function on $w$ denoted $g^{x'}_{i}(w).$ 
If there is such a mixed-linear section for  $Q(x,w,z),$ 
we say that $Q$ is an {\em extended formulation with a  mixed
 linear section}.
An example of EFs with a mixed linear section
are formulations arising from the  SA procedure where $y$ is the vector
of the new variables corresponding to the linearized products. 
The following theorem can be proved similarly to
Theorem \ref{theorem:ext1}.

\begin{theorem}\label{theorem:ext2}
Given a mixed integer polytope $P(x,w) \subseteq  [0,1]^{d_x}
\times \mathbb{R}^{d_w},$ 
for every $\rho$-appro\-xi\-mate, $\rho \geq
1,$   extended formulation $Q(x,w,y)$  with a 
mixed linear section, there is a translation $T[Q(x,w,y)]$ 
 to mixed product
relaxations such that
$$P(x,w) \subseteq \opname{proj_{x,w}}(T[Q(x,w, y)]) \subseteq \opname{proj_{x,w}}(Q(x,w,y)).$$ 
\end{theorem}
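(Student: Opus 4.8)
The plan is to follow the proof of Theorem~\ref{theorem:ext1} almost line for line, with the monomial (Fourier) basis over $\{0,1\}^{d_x}$ replaced by the coordinates of the \emph{mixed} product section. The single new ingredient is the mixed-linear hypothesis on $Q$, and the thrust of the argument is that this hypothesis is precisely what forces the translation to land inside the mixed product variable space.

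First I would fix a mixed-linear section $g$ of $Q(x,w,y)$. By definition, for each auxiliary coordinate $y_i$ the value $g_i(x,w)$ is, for every fixed boolean $x$, an affine function of $w$; hence I can write $g_i(x,w)=c^{(i)}_0(x)+\sum_{j=1}^{d_w} c^{(i)}_j(x)\,w_j$, where each $c^{(i)}_j$ depends on the boolean vector $x$ only. I would then expand every $c^{(i)}_j$ in the monomial basis $\{\prod_{x_k\in\mathcal{E}}x_k\}$ over the hypercube, substitute back, and collect terms. Each $g_i(x,w)$ then becomes a linear combination of the constant $1$, the pure products $\prod_{x_k\in\mathcal{E}}x_k=f_{\mathcal{E}}(x,w)$, the bare fractional variables $w_j=z_{\{w_j\}}$, and the cross terms $(\prod_{x_k\in\mathcal{E}}x_k)\,w_j=f_{\mathcal{E}w_j}(x,w)$. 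Since every term is a coordinate of $(f(x,w),1)$, there is a matrix $A$ with $g(x,w)=A\cdot(f(x,w),1)$, the exact analogue of \eqref{eq:fourier}. Linearizing this identity, i.e. replacing the sections $g$ and $f$ by the variable vectors $y$ and $z$, yields the substitution $y=A\cdot(z,1)$ and hence the translation $T[Q(x,w,y)]$.

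I would then verify the two inclusions exactly as in Theorem~\ref{theorem:ext1}. For $\opname{proj_{x,w}}(T[Q])\subseteq\opname{proj_{x,w}}(Q)$, any feasible $(x_0,w_0,z_0)$ of $T[Q]$ gives the feasible point $(x_0,w_0,A(z_0,1))$ of $Q$. For $P(x,w)\subseteq\opname{proj_{x,w}}(T[Q])$ — equivalently, for $T[Q]$ to be a genuine mixed product relaxation — it suffices to check that $f(x',w')$ is feasible for $T[Q]$ for every $(x',w')\in P(x,w)\cap(\{0,1\}^{d_x}\times\mathbb{R}^{d_w})$: putting $z=f(x',w')$ turns the inequalities of $T[Q]$ into those of $Q$ evaluated at $(x',w',g(x',w'))$, which hold since $g$ is a section. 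The required projection identity $\opname{proj_{x,w}}(f(x',w'))=(x',w')$ holds because $f_{\{x_i\}}=x_i$ and $f_{\{w_j\}}=w_j$.

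The step that carries the real content, and the only place the argument departs from the \zeone case, is the decomposition of $g_i$. The point to get right is that the affine-in-$w$ requirement of a mixed-linear section is matched exactly to the mixed product basis, in which each variable $z_{\mathcal{E}w_j}$ records a product containing \emph{at most one} fractional variable. Were the section quadratic in some $w_jw_{j'}$, the expansion would demand monomials of the form $f_{\mathcal{E}w_jw_{j'}}$, which are absent from the $z$ variables, and the translation would fail to be a mixed product relaxation. Thus the obstacle is conceptual rather than computational: one must invoke the mixed-linear hypothesis precisely at the expansion step, whereas in Theorem~\ref{theorem:ext1} the boolean section had the right form for free; everything else transfers mechanically.
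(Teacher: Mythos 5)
Your proposal is correct and follows essentially the same route as the paper: both rest on the fact that any function on $\{0,1\}^{d_x}$ is a linear combination of monomials, so that the affine-in-$w$ section expands exactly into the coordinates of the mixed product section, yielding a substitution $y=A\cdot(z,1)$ whose two-inclusion verification is then identical to that of Theorem~\ref{theorem:ext1}. The only cosmetic difference is that the paper routes the expansion through the boolean indicator operators $\chi_s(x)$ (Claims~\ref{claim:operator} and~\ref{claim:mixedtrans}) whereas you expand the coefficient functions $c^{(i)}_j(x)$ in the monomial basis directly; these are equivalent, and your remark that the mixed-linear hypothesis is precisely what keeps the expansion inside the mixed product variable space (at most one fractional variable per monomial) correctly identifies the crux.
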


\begin{proof}
Let  the dimension  of $P(x,w)$  be  $d=d_x+d_w$.  We  shall give  for
the variables $y \in \mathbb{R}^{d_y}$ of $Q(x,w, y)$ a substitution $T$ 
so that the theorem holds. 

Consider a variable $y_i$ and the corresponding coordinate 
of the mixed linear 
section, 
$g^{x'}_{i}(w)=\sum_j b^{x'}_{i}  w_j + c_{x'}$ for each $x' \in
\{0,1\}^{d_x}$ and $i=1,\ldots,d_y.$ 


First, we will prove a helpful claim which states a fact from 
elementary Fourier
analysis in our setting. 
For $x, s \in \opname{proj_x} \left ( P(x,w) \cap (\{0,1\}^{d_x}\times
\mathbb{R}^{d_w}) \right ),$ 
define the boolean
indicator operator $\chi_{s}(x)$ to be $1$ when $s=x$ and $0$ otherwise. 
First, we will show that this operator can be expressed as a linear
combination of the product sections constrained to monomials with
only boolean variables.  
In other words, we determine 
coefficients $a^s_{\mathcal{E}},$ $\mathcal{E} \subseteq \{x_1,
  \ldots, x_{d_x}\},$   such that 
$\chi_{s}(x) = \sum_{\mathcal{E}}
a^s_{\mathcal{E}} f_{\mathcal{E}}(x).$
The  translation of the  indicator operator  $i_{s}(x)$ of  an integer
solution  $s$  is a  linear  expression  of  the form  $T_{i_s}=  \sum
a^s_{\mathcal{E}}P[\mathcal{E}](x)$. 
We shall iteratively generate the
coefficients $a^s_{\mathcal{E}}$.  
The only nonzero coefficients 
will be those corresponding to sets of variables that are supersets
of the set of variables being $1$ in $s$ -- let that set be
$\mathcal{E}^s_1$. We give  the construction iteratively starting from
$|\mathcal{E}^s_1|$ to $d_x,$ defining in step $k$ the coefficients of
such sets of size $k$.

In the  first iteration  simply set $a_{\mathcal{E}^s_1}=1$.  At step
$k> |\mathcal{E}^s_1|$, for each set $\mathcal{E}'$ of size $k$ that
is  a  superset  of  $\mathcal{E}^s_1$,  set  $a^s_{\mathcal{E}'}=  -
\sum_{\mathcal{E}  \subset  \mathcal{E}'}  a^s_{\mathcal{E}}$.  This
concludes the definition of the coefficients.

\begin{claim} \label{claim:operator}
For each integer solution $s'  \in \opname{proj_x} \left ( P(x,w) \cap (\{0,1\}^{d_x}\times
\mathbb{R}^{d_w}) \right )$, $\chi_s(s') = \sum_{\mathcal{E}} 
a^s_{\mathcal{E}} f_{\mathcal{E}}(s')$.
\end{claim}

\noindent
{\em Proof of the claim.} 
By overloading the notation, we denote by $s$ both the integer
solution and the support of that integer solution, that is the set
$\{x_i \mid s_i=1\}$. If $s' \supseteq s$ then the  nonzero terms of
the sum $\sum_{\mathcal{E}} 
a^s_{\mathcal{E}} f_{\mathcal{E}}(s')$ are exactly those 
that correspond to sets $\mathcal{E}$ such that 
$s \subseteq \mathcal{E} \subseteq s'$. We have that $\sum_{\mathcal{E}} 
a^s_{\mathcal{E}} f_{\mathcal{E}}(s')= \sum_{\mathcal{E} \subseteq s'}
a^s_{\mathcal{E}}$ which, by the construction of the  coefficients, 
is $1$ if $s=s'$ and $0$ if $s'\supset s$, as required. 
Otherwise, if $s-s' \neq \varnothing$, then all the $f_{\mathcal{E}}(s')$ 
with nonzero coefficients are $0,$ so $\sum_{\mathcal{E}} 
a^s_{\mathcal{E}} f_{\mathcal{E}}(s')=0$.

By Claim \ref{claim:operator} we have that for an integer vector $s \in \{0,1\}^{d_x}$ 
 the  indicator operator $\chi_{s}(x)$ is equal to 
 $\sum_{\mathcal{E} \subseteq \{x_1, \ldots, x_{d_x}\}}
a^s_{\mathcal{E}} f_{\mathcal{E}}(x).$
For each set of integer variables $\mathcal{E}$ and each fractional
variable $w_j$ let $z_{\mathcal{E}w_j}$ denote the corresponding 
mixed product variable and $f_{\mathcal{E}w_j}(x,w)$ the 
corresponding coordinate of the mixed product section.  
It is now easy to show the following.  

\begin{claim} \label{claim:mixedtrans}
For each mixed integer solution $(x',w')$, and for $i=1,\ldots,d_y,$\\
$g^{x'}_{i} (w')= \sum_j \sum_{\mathcal{E}}a^s_{\mathcal{E}}  b^{x'}_{i}
f_{\mathcal{E}w_j}(x',w') + \sum_{\mathcal{E}}a^s_{\mathcal{E}} c_{x'}f_{\mathcal{E}}(x').$
\end{claim}

To conclude the definition of $T$, set 
\[ y^i =\sum_{x'}\sum_j \sum_{\mathcal{E}}a^s_{\mathcal{E}}  b^{i}_{x'}
z_{\mathcal{E}w_j}   + \sum_{x'} \sum_{\mathcal{E}}a^s_{\mathcal{E}}  c_{x'}
z_{\mathcal{E}} , \;\; i=1,\ldots,d_y.\]
which implies 
\[ y^i =\sum_{x'}\sum_{\mathcal{E}}a^s_{\mathcal{E}}  (\sum_j
b^{i}_{x'}z_{\mathcal{E}w_j} + c_{x'}z_{\mathcal{E}}), \;\;
i=1,\ldots,d_y   \]
By  Claim~\ref{claim:mixedtrans}, using arguments  similar to the ones
in the proof of Theorem~\ref{theorem:ext1}, it follows that 
that $P(x,w)\subseteq \opname{proj_{x,w}} (T[Q(x,w,y)] ) \subseteq 
\opname{proj_{x,w}} ( Q(x,w,y)).$
\end{proof}

\begin{corollary}  \label{cor:ext2} 
A lower bound $b$ on the size of any mixed product
relaxation  $D$  which is a $\rho$-approximate extended formulation of 
 the \zeone mixed integer polytope $P(x,w)$ implies a lower bound $b$ on the size 
 of any $\rho$-approximate  extended  formulation $Q(x,w,y)$ of
 $P(x,w)$ with a  mixed linear section. 
\end{corollary}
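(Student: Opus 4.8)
The plan is to obtain the corollary as an immediate consequence of Theorem~\ref{theorem:ext2}, in exact parallel to how Corollary~\ref{cor:ext1} is meant to follow from Theorem~\ref{theorem:ext1}. Assume the hypothesis, namely that every mixed product relaxation of $P(x,w)$ that is a $\rho$-approximate extended formulation has size at least $b$. I would then take an \emph{arbitrary} $\rho$-approximate extended formulation $Q(x,w,y)$ of $P(x,w)$ that admits a mixed linear section and feed it into Theorem~\ref{theorem:ext2}, producing the translation $T[Q(x,w,y)]$, which is a mixed product relaxation of $P(x,w)$ satisfying the sandwich
$$P(x,w) \subseteq \opname{proj_{x,w}}(T[Q(x,w,y)]) \subseteq \opname{proj_{x,w}}(Q(x,w,y)).$$
The goal is to show that $T[Q]$ is itself a $\rho$-approximate mixed product relaxation of the same size as $Q$, so that the assumed bound $b$ applies to it and hence to $Q$.

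Two facts about $T[Q]$ then finish the argument. First, the translation replaces each variable $y_i$ by an affine expression in the mixed product variables $z$ and does not alter the constraint list, so $\operatorname{size}(T[Q]) = \operatorname{size}(Q)$; this is the same observation recorded for the \zeone case just before Theorem~\ref{theorem:ext1}. Second, and this is the only step that needs a short verification, $T[Q]$ inherits the $\rho$-approximation guarantee. Writing any admissible objective as a linear functional $\ell$ on the $(x,w)$-space (which does not see the auxiliary variables, so optimizing over $Q$ equals optimizing over $\opname{proj_{x,w}}(Q)$), monotonicity of minimization under set inclusion applied to the sandwich yields, for a minimization problem,
$$\min_{\opname{proj_{x,w}}(Q)} \ell \;\le\; \min_{\opname{proj_{x,w}}(T[Q])} \ell \;\le\; \min_{P(x,w)} \ell.$$
The right inequality says $T[Q]$ is a valid relaxation, while the left inequality combined with the factor-$\rho$ bound already known for $Q$ gives $\min_{\opname{proj_{x,w}}(T[Q])}\ell \ge \min_{\opname{proj_{x,w}}(Q)}\ell \ge \rho^{-1}\min_{P(x,w)}\ell$; the maximization case is symmetric. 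Hence $T[Q]$ is $\rho$-approximate.

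Putting the two facts together, $T[Q]$ is a $\rho$-approximate mixed product relaxation of $P(x,w)$ with $\operatorname{size}(T[Q]) = \operatorname{size}(Q)$, so the standing assumption forces $\operatorname{size}(Q) = \operatorname{size}(T[Q]) \ge b$. Since $Q$ was an arbitrary $\rho$-approximate extended formulation with a mixed linear section, $b$ lower-bounds the size of all such formulations, which is the claim. The only genuinely substantive point is the preservation of the approximation ratio under translation; I expect the main thing to watch there is the notational collision between the cost vector and the fractional variables (both denoted $w$ in the approximate-formulation definitions), which is why I would state the objective abstractly as the functional $\ell$ before invoking the monotonicity sandwich, after which the reasoning is identical to that behind Corollary~\ref{cor:ext1}.
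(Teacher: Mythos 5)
Your proposal is correct and follows exactly the route the paper intends: the corollary is stated without proof as an immediate consequence of Theorem~\ref{theorem:ext2}, and your argument simply makes explicit the two facts the paper leaves implicit (size preservation under substitution, which is recorded just before Theorem~\ref{theorem:ext1}, and inheritance of the $\rho$-approximation guarantee from the projection sandwich). Your care about verifying that $T[Q]$ is itself $\rho$-approximate is a welcome addition, but it does not constitute a different approach.
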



\section{A method for lower bounding the size of LPs with known sections}

Here we present  a methodology 
to lower bound  the size  of  relaxations that
achieve a desired integrality gap.  For simplicity we do not deal in this
section with mixed integer sets.

Our method can be summarized as follows.  
Let  $G(z) \subseteq [0,1]^d$  be a \zeone polytope. 
We design a family $\mathcal{I}$ of instances parameterized by the
dimension $d.$ For  each instance $I\in  \mathcal{I}$ of dimension  $d$ we
define a set of  points $\mathcal{C}_I \subseteq [0,1]^{{d}} \setminus G(z)$ which we call
the   \emph{core    of  $I$ with respect to $G.$}
    Note that the points of the  core must
be infeasible for $G.$   To prove a lower
bound $r(n)$ on the size of $G$ it suffices to  show  that at least that many
inequalities are needed to separate $\mathcal{C}_I$ from $G.$
Additionally, for a minimization problem with  $O$ being the set of admissible
objective functions, if for some
$z \in \mathcal{C}_I$ there is
an admissible cost function $w_z $ such that $w_z^T z < \rho^{-1} 
\text{Opt}_{I,w_z},$ $0< \rho \leq 1,$ 
where $\text{Opt}_{I,w_z}$ is  the cost of the optimal  integer solution with
respect to $w_z,$ we call $z$ {\em $\rho$-gap inducing wrt $O.$} If we design the
core so that all its members are $\rho$-gap inducing, the lower bound
will  hold for $\rho$-approximate formulations.

To define constructively the core for a specific family of extended
formulations of a polytope $P$ the sections of the variables $z$ must be known.  
This  requirement is  fulfilled by  the product  relaxations we
will focus on. 
By Theorem 
\ref{theorem:ext1} 
above, proving a lower bound
on the size for an arbitrary extended relaxation $Q(x,y)$ of a
polytope $P(x)$ can be reduced  to a proof of the same bound on
the size of a corresponding product relaxation $D(z)$. The
following meta-theorem shows that such a proof can always be obtained
by proving  the existence  of a suitable  core for  the product
relaxation. 
Recall the  definition of the ``tightest'' product relaxation of $P(x),$ $\hat{D},$
in Section~\ref{sec:prelim}. 
We say that a set of vectors $s \subseteq [0,1]^{d} \setminus \hat{D}$ is \emph{conflicting} if
$\operatorname{conv}(s) \cap \hat{D} \neq \varnothing.$
Any single valid inequality of  $\hat{D}$ cannot separate all
points of a conflicting set. 
Given a set  $O_d \subseteq \mathbb{R}^d$ of admissible objective functions
associated with a \zeone polytope $P(x) \subseteq [0,1]^d,$ we define 
$\tilde O_d \subseteq \mathbb{R}^{2^d-1},$ to contain the vectors in $O_d$
extended with zeroes in the coordinates corresponding to the non-singleton
product variables. 

\begin{theorem} \label{theorem:meta} 
Given a \zeone polytope $P(x) \subseteq [0,1]^d,$ and an associated set of
admissible objective functions $O_d \subseteq \mathbb{R}^d,$ 
the $\rho$-approximate extension complexity, $\rho \geq 1,$ of $P(x)$
 is
at least $r(n),$ iff there exists a family of instances 
$\mathcal{I}(n)$ and, for every $I \in
\mathcal{I},$  a core $\mathcal{C}_I$ wrt $\hat{D},$ which consists 
of $\rho$-gap inducing
  vectors wrt  $\tilde O_d,$   with the  following property:
for  any  partition  of $\mathcal{C}_I$  into less  than  $r(n)$ parts
there must be a part containing a set of conflicting vectors.
\end{theorem}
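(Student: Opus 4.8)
The plan is to prove the biconditional by establishing the equivalence between a lower bound $r(n)$ on the $\rho$-approximate extension complexity and the existence of a core with the stated hypergraph-coloring property. By Theorem~\ref{theorem:ext1} and Corollary~\ref{cor:ext1}, it suffices to argue everything at the level of product relaxations $D(z)$, since the $\rho$-approximate extension complexity of $P(x)$ equals the minimum size of a $\rho$-approximate product relaxation. The central observation driving both directions is the following separation principle: any product relaxation $D(z)$ is a $\rho$-approximate EF only if its feasible region contains $\hat{D}$ (the canonical product relaxation) and excludes every $\rho$-gap-inducing vector; hence each $\rho$-gap-inducing core vector must be cut off by at least one of the facet-defining inequalities of $D$.

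\medskip
\noindent\textbf{The ``if'' direction (sufficiency).} First I would assume such a family $\mathcal{I}(n)$ and cores $\mathcal{C}_I$ exist, and argue by contradiction that no $\rho$-approximate product relaxation $D$ with fewer than $r(n)$ inequalities exists. Fix such a $D$ with inequalities $\ell_1,\ldots,\ell_m$, $m < r(n)$, and fix an instance $I$. Since $D$ is a valid $\rho$-approximate relaxation, every vector in the core $\mathcal{C}_I$ is infeasible for $D$ (being $\rho$-gap-inducing wrt $\tilde O_d$, it would otherwise witness a violation of the approximation guarantee), so each core vector violates at least one inequality $\ell_k$. Assign each core vector to one such violated inequality; this partitions $\mathcal{C}_I$ into at most $m < r(n)$ parts. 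By hypothesis some part contains a conflicting set $s$, i.e.\ $\operatorname{conv}(s)\cap\hat{D}\neq\varnothing$. But all vectors of $s$ violate the single inequality $\ell_k$, so $\ell_k$ is violated throughout $\operatorname{conv}(s)$, hence at some point of $\hat{D}$ — contradicting that $\ell_k$ is valid for $D\supseteq\hat{D}$. Therefore $m\geq r(n)$.

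\medskip
\noindent\textbf{The ``only if'' direction (necessity).} Here I would assume the $\rho$-approximate extension complexity is at least $r(n)$ and construct the required core. The natural choice is to let $\mathcal{C}_I$ be the set of \emph{all} $\rho$-gap-inducing vectors wrt $\tilde O_d$ for instance $I$ (this is the ``canonical'' core alluded to in the introduction). One must then verify the partition property. Suppose for contradiction that $\mathcal{C}_I$ admits a partition into fewer than $r(n)$ parts, no part of which contains a conflicting set — equivalently, each part $V$ is \emph{non-conflicting}, meaning $\operatorname{conv}(V)\cap\hat{D}=\varnothing$. By separating hyperplane / LP-duality, each such $V$ can be cut off from $\hat{D}$ by a single valid inequality $\ell_V$ of $\hat{D}$ (valid for all of $\hat{D}$, violated by every point of $V$). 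Taking these inequalities over all parts, together with the defining inequalities of $\hat{D}$ restricted appropriately, yields a product relaxation separating all of $\mathcal{C}_I$ from $\hat{D}$ using fewer than $r(n)$ inequalities, hence a $\rho$-approximate product relaxation of size below $r(n)$ — contradicting the assumed lower bound.

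\medskip
\noindent\textbf{The main obstacle} I anticipate lies in the necessity direction, specifically in showing that finitely many inequalities suffice to separate a possibly-infinite non-conflicting part $V$ from the compact polytope $\hat{D}$ by a \emph{single} hyperplane, and in confirming that the resulting separating system is genuinely a $\rho$-approximate product relaxation (containing $\hat{D}$, hence containing $\{f(x)\}$ for integer $x\in P$, and correctly maintaining the approximation guarantee on admissible objectives). The separation of a non-conflicting set from $\hat{D}$ is where compactness of $\hat{D}$ and a separating-hyperplane argument must be invoked carefully, and reconciling the number of separating inequalities with the extension complexity lower bound — so that the counts match exactly at the threshold $r(n)$ — is the delicate bookkeeping step. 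The ``if'' direction, by contrast, is the clean pigeonhole-plus-validity argument and should go through directly.
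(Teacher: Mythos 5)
Your proposal is correct and follows essentially the same route as the paper: the sufficiency direction is the same pigeonhole argument (assign each gap-inducing core vector to a violated inequality, then use that a single inequality cannot cut off a conflicting set without cutting $\hat{D}$), and the necessity direction uses the same canonical core of all $\rho$-gap-inducing product vectors together with one separating hyperplane per non-conflicting part. The technical point you flag about separating a possibly infinite non-conflicting part from $\hat{D}$ by a single inequality is also left implicit in the paper's own proof, so your treatment matches it in both structure and level of detail.
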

\begin{proof}
Assume first that the $\rho$-approximate extension
complexity is at least $r(n).$ 
Define $\mathcal{C}_I$ to be the set 
of all $\rho$-gap  inducing
product vectors.  
If  we  can   partition  $\mathcal{C}_I$ into less  than  $r(n)$ parts
so  that there is
no conflicting subset $s$ in any part, then we can define an
inequality for each  part of the partition that  separates the vectors
of at least that part from $\hat{D}.$ But we know that 
less than $r(n)$ inequalities  cannot separate all  the $\rho$-gap
inducing product  vectors. Thus we have that  for any decomposition
of  those vectors  into less than $r(n)$   parts there  must be  a part
containing a set of conflicting vectors. 

Conversely, assume we   can find
a core $\mathcal{C}_I$ wrt $\hat{D}$ consisting of $\rho$-gap inducing
vectors such that 
for any partition of $\mathcal{C}_I$ into less than $r(n)$ sets
there must be a part containing a set of conflicting vectors. Then  the size
of $\hat{D}$ is at least $r(n).$ 
If not, there is a decomposition into less than $r(n)$ parts where each
part consists of the core members separated by each inequality -- in case a  member is separated by more than one inequality, we arbitrarily include it into just one of the resulting parts. 
Observe that $\mathcal{C}_I$   is not
only a core wrt $\hat{D}$ but also is 
a  core wrt any $\rho$-approximate product relaxation of
$P.$ By Theorem \ref{theorem:ext1}, the lower bound $r(n)$ applies to
the size of any $\rho$-approximate extended formulation of $P.$ 
\end{proof}

Let $\mathcal{H}(\mathcal{C}_I)$ be the, possibly infinite, 
 hypergraph with vertices the members of $\mathcal{C}_I$ and hyperedges the
 conflicting subsets of $\mathcal{C}_I$. Theorem \ref{theorem:meta} can be
 restated more conveniently:

\begin{theorem} \label{theorem:hyper}
Given a \zeone polytope $P(x) \subseteq [0,1]^d,$ and an associated set of
admissible objective functions $O_d \subseteq \mathbb{R}^d,$ 
the $\rho$-approximate extension complexity, $\rho \geq 1,$ of $P(x)$
 is
at least $r(n),$ iff there exists a family of instances 
$\mathcal{I}(n)$ and, for every $I \in
\mathcal{I},$  a core $\mathcal{C}_I$ wrt $\hat{D},$ which consists 
of $\rho$-gap inducing
  vectors wrt  $\tilde O_d,$  such that $\mathcal{H}(\mathcal{C}_I)$ has
  chromatic number $r(n)$.
\end{theorem}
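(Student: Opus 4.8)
The plan is to read Theorem~\ref{theorem:hyper} as nothing more than a translation of Theorem~\ref{theorem:meta} into the vocabulary of hypergraph coloring, so that the substantive content is inherited from the meta-theorem and only the passage from ``at least $r(n)$'' to ``exactly $r(n)$'' requires a short extra argument. First I would record the defining dictionary between the two formulations: a proper $k$-coloring of $\mathcal{H}(\mathcal{C}_I)$ is, by definition of the chromatic number of a hypergraph, an assignment of $k$ colors to the vertices $\mathcal{C}_I$ under which no hyperedge is monochromatic; equivalently, a partition of $\mathcal{C}_I$ into $k$ parts in which no single part contains a conflicting set. Hence the property ``for any partition of $\mathcal{C}_I$ into fewer than $r(n)$ parts some part contains a conflicting set'' is logically identical to ``$\mathcal{H}(\mathcal{C}_I)$ admits no proper coloring with fewer than $r(n)$ colors'', that is, to $\chi(\mathcal{H}(\mathcal{C}_I)) \geq r(n)$. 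This one observation converts the partition hypothesis of Theorem~\ref{theorem:meta} into a statement about chromatic number.

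For the ``if'' direction the translation is immediate: a core with $\chi(\mathcal{H}(\mathcal{C}_I)) = r(n)$ in particular satisfies $\chi \geq r(n)$, which is precisely the hypothesis of the ``if'' direction of Theorem~\ref{theorem:meta}, yielding $\rho$-approximate extension complexity at least $r(n)$. For the ``only if'' direction I would apply both directions of Theorem~\ref{theorem:meta} to the full core $\mathcal{C}^*_I$ of all $\rho$-gap-inducing vectors. The forward direction gives $\chi(\mathcal{H}(\mathcal{C}^*_I)) \geq c$ whenever the extension complexity is at least $c$, while the converse direction, applied at the threshold $c = \chi(\mathcal{H}(\mathcal{C}^*_I))$, gives extension complexity $\geq \chi(\mathcal{H}(\mathcal{C}^*_I))$; together these pin the chromatic number of the full core to equal the $\rho$-approximate extension complexity exactly, so in particular $\chi(\mathcal{H}(\mathcal{C}^*_I)) \geq r(n)$.

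It then remains to produce a core whose hypergraph has chromatic number exactly $r(n)$, and here I would delete vertices from $\mathcal{C}^*_I$ one at a time. Any subset of gap-inducing vectors is again a valid core, so each intermediate set is admissible, and deleting a single vertex $v$ lowers the chromatic number by at most one: an optimal proper coloring of $\mathcal{H}-v$ extends to $\mathcal{H}$ by giving $v$ a fresh color, which creates no monochromatic hyperedge because no hyperedge is a singleton---a single infeasible vector has convex hull disjoint from $\hat{D}$ and is therefore never conflicting. Starting from $\chi \geq r(n)$ and deleting vertices until none remain drives the chromatic number down to $1$ through decrements of $0$ or $1$, so by a discrete intermediate-value argument some intermediate sub-core $\mathcal{C}_I \subseteq \mathcal{C}^*_I$ attains $\chi(\mathcal{H}(\mathcal{C}_I)) = r(n)$ exactly.

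I expect the genuine obstacle to lie in this last ``exactly $r(n)$'' step rather than in the coloring dictionary, which is essentially definitional. The argument hinges on the monotonicity of hypergraph chromatic number under vertex deletion \emph{with unit drops}, and this in turn relies crucially on the observation that conflicting sets never degenerate to singletons; were a singleton permitted as a hyperedge, a single deletion could collapse the chromatic number below $r(n)$ and the intermediate-value step would break down. Verifying the no-singleton property and the one-vertex-deletion bound carefully is therefore the only delicate point in an otherwise direct reformulation.
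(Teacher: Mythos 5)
Your reduction of the statement to Theorem~\ref{theorem:meta} via the dictionary between proper hypergraph colorings and partitions with no part containing a conflicting set is precisely how the paper treats this theorem: no separate proof is given, the result being introduced only with the remark that Theorem~\ref{theorem:meta} ``can be restated more conveniently.'' So the definitional part of your argument and the ``if'' direction coincide with the paper's (implicit) proof. You are also right that the literal wording ``has chromatic number $r(n)$'' demands equality where Theorem~\ref{theorem:meta} only supplies ``at least $r(n)$,'' a discrepancy the paper silently glosses over, and your observation that playing the two directions of Theorem~\ref{theorem:meta} against each other pins $\chi(\mathcal{H}(\mathcal{C}^*_I))$ for the full core of all $\rho$-gap-inducing vectors to the exact $\rho$-approximate extension complexity is sound (and matches the remark in the paper's introduction).

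The step that does not survive scrutiny is the descent to a sub-core of chromatic number exactly $r(n)$. The full core $\mathcal{C}^*_I$ is in general an infinite, indeed uncountable, set of fractional vectors, so ``deleting vertices one at a time until none remain'' is a transfinite process, and your discrete intermediate-value argument controls only the successor stages: at limit stages the chromatic number can drop by more than one. (Already for a disjoint union of infinitely many triangles, removing two vertices of each triangle during the first $\omega$ deletions takes the chromatic number from $3$ to $1$ without its ever equalling $2$.) The repair is standard but should be stated: a point of $\hat{D}$ lying in $\operatorname{conv}(s)$ is a convex combination of finitely many members of $s$, so every conflicting set contains a finite conflicting subset and the hypergraph effectively has only finite edges; by a De Bruijn--Erd\H{o}s compactness argument, $\chi(\mathcal{H}(\mathcal{C}^*_I)) \geq r(n)$ then yields a \emph{finite} sub-core whose hypergraph already has chromatic number at least $r(n)$. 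Since any subset of gap-inducing vectors is again a legitimate core, your one-vertex-at-a-time deletion, together with the correct observation that singletons are never conflicting, goes through verbatim on that finite set and delivers the required sub-core with chromatic number exactly $r(n)$.
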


Theorem~\ref{theorem:meta} suggests that the best possible lower bound
on the extension complexity can always be achieved by proving the existence of an 
 appropriate
core 
in the product space.
In the applications in this paper we implement a  
version of the method 
that imposes  stronger requirements
 on the decomposition, namely  the constructed hypergraph will be a clique.

\section{Lower bounds for approximate mixed product relaxations
  for \cfl}\label{section::distr-cfl}
For \cfl, the
linear encoding $\mathcal{N}_{\text{\cfl}} = (L, O)$ is defined as follows. For a \cfl\ instance, given
the number $n$ of facilities, the number $m$ of clients, the
capacities    $K\in   \mathbb{R}^n_+$    and    the   demands    $D\in
\mathbb{R}^m_+$, 
we use the classic variables $y_i,$
$i=1,\ldots,n,$  $x_{ij},$  $i=1,\ldots,n,$  $j=1,\ldots,m$  with  the
usual  meaning of facility opening and client assignment respectively. 
The set of feasible solutions $(y,x)$ is defined 
in the obvious manner. 
Thus for dimension $d=n + nm,$ 
$L \cap \{0,1\}^d$ is completely determined by the quadruple
$(n,m,K,D).$ The
set of admissible objective functions $O \cap \mathbb{R}^{n+nm}$ is the set of pairs
$({\bf f}, {\bf c})$ where ${\bf f} \in \mathbb{R}^n_+$ are the facility
opening costs and ${\bf c}=[c_{ij}] \in \mathbb{R}^{nm}_+$ are 
 connection costs  that satisfy  $c_{ij}\leq
c_{i'j}+c_{i'j'}+c_{ij'}$.   

The capacitated facility location  problem with general capacities and
demands is  a mixed integer optimization problem  where the facilities
are  opened integrally  but the  clients  are allowed  to be  assigned
fractionally  to   the  set  of   opened  facilities.
 In this section, 
we show an exponential lower bound on the size of any
mixed product relaxation  of the \cfl\  polytope. 

In our proof we will consider a parameterized  instance $I=I(3n,m,U,d)$  
with uniform capacities $U$ and uniform unit
demands $d=1,$ where  
$3n$ is the number of facilities, and $m$ the number of clients. 
 Furthermore we will have 
that the number of clients is $m=n^4+1$ and the capacities and demands are such that
  $(n^4+1)-nU=2^{-n^2}$. Observe that $n^3<U<(n^3+1)$.
In order to define the  core $\mathcal{C}_I$ of the instance $I$ we
first describe a random experiment based on whose outcome we will
later define the members of the core. 
Given disjoint sets $k,l \subseteq  F$ of size $n$ each, the random experiment defines 
 a distribution $\mathcal{D}_{k,l}$ over mixed integer vectors in the
 classic encoding. These vectors correspond in general to pseudo-solutions. 
The following
experiment  defines the  distribution $\mathcal{D}_{k,l}.$
The quantities $\bar{x}_{ij}$ are defined in Lemma~\ref{lemma::exp-vector} below. 
\\ 
\noi

\centerline{\hrulefill {\sc Random Experiment} \hrulefill}
\noi 
Facilities in $k$ are always opened. 

\noi  {\it Case  1}.  With  probability  $1-\frac{20}{n^2(1+1/n)}$ all
facilities in $F-l$ are opened and those of $l$ are closed. Distribute
evenly the client  demand to facilities in $k$.  Note that this outcome
of  the  experiment    does  not  respect  the
capacities.

\noi  {\it  Case 2.} 
Otherwise, with probability  $\frac{20}{n^2(1+1/n)}$ pick at random a
 subset $q$ of the facilities in $F-k$ with at least one facility from
 $l$ and  open them.  Assign randomly demand  to each facility  $i$ in
 $q\cap l$  so that  $i$ takes $\frac{\sum_j  \bar{x}_{ij}}{10/n^2}$ units
 and the rest  of the demand is equally  distributed to the facilities
 in $k$.
\noi
\hrulefill

\begin{lemma}\label{lemma::exp-vector}
The expected vector $(\bar{y},\bar{x})$ wrt $\mathcal{D}_{k,l}$ is the following: 
$\bar{y_i}=1$ for $i\in k$, $\bar{y_i}=1-\frac{10}{n^2(1+1/n)}$ for
 $i \in  F-k-l$,  $\bar{y_i}=\frac{20(2^{n-1})}{n^2(1+1/n)(2^n-1)}$
 for $i \in  l$. For all $j \in C,$ $\bar{x}_{ij}=\frac{1-n^{-2}}{|k|}$ for $i \in k$, $\bar{x}_{ij}=0$ for $i \in F-\{k\cup l\}$, $\bar{x}_{ij}=\frac{n^{-2}}{|l|}$ for $i \in l$.
\end{lemma}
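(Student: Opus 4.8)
The plan is to obtain $(\bar y,\bar x)$ by a direct expectation computation over the Random Experiment, invoking linearity of expectation and conditioning on the two cases. Write $p=\frac{20}{n^2(1+1/n)}$ for the probability of Case~2 and $1-p$ for that of Case~1. The opening variables come first. Facilities in $k$ are opened in every outcome, so $\bar y_i=1$ for $i\in k$ with no further argument. For $i\notin k$ the only source of randomness is the event that $i$ is opened, so $\bar y_i=\Pr[i\text{ open}]$, and I would evaluate this probability by splitting over the two cases.

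The combinatorial core of the $\bar y$ computation is a counting argument inside Case~2, where $q$ ranges over the subsets of $F-k$ that meet $l$. Since $|F-k|=2n$ and exactly $2^n$ subsets avoid $l$, there are $2^{2n}-2^n=2^n(2^n-1)$ admissible choices of $q$. For a fixed $i\in F-k-l$, among the $2^{2n-1}$ subsets containing $i$ exactly $2^{n-1}$ avoid $l$, so the conditional probability that $i\in q$ is $\frac{2^{2n-1}-2^{n-1}}{2^n(2^n-1)}=\tfrac12$; combined with Case~1, in which every facility of $F-l$ is open, this gives $\bar y_i=(1-p)+\tfrac{p}{2}=1-\frac{10}{n^2(1+1/n)}$. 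For a fixed $i\in l$ every subset containing $i$ automatically meets $l$, so the conditional probability that $i\in q$ is $\frac{2^{2n-1}}{2^n(2^n-1)}=\frac{2^{n-1}}{2^n-1}$, and since such facilities are closed in Case~1 we get $\bar y_i=p\cdot\frac{2^{n-1}}{2^n-1}=\frac{20\,(2^{n-1})}{n^2(1+1/n)(2^n-1)}$, matching the statement.

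For the assignment variables I would follow the flow of demand in each case, using that by the symmetry of the instance across clients $\bar x_{ij}$ is independent of $j$ and equals $\tfrac1m\,E[\sum_j x_{ij}]$, i.e.\ $\tfrac1m$ times the expected total demand routed to $i$. A facility in $F-k-l$ is never assigned demand (in Case~1 all demand goes to $k$, and in Case~2 only the facilities of $q\cap l$ and of $k$ receive demand), so $\bar x_{ij}=0$ there. For $i\in l$ the calibration built into Case~2 is the crux: the total demand drawn by an open $i\in l$ is set to $\frac{\sum_j \bar x_{ij}}{10/n^2}$, which is chosen precisely so that, after multiplying by the opening probability $\bar y_i$ and dividing by $m$, the expected per-client assignment returns the target $\bar x_{ij}=n^{-2}/|l|$. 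I would treat the asserted value of $\bar x_{ij}$ for $i\in l$ as an ansatz and verify that it satisfies this consistency equation. Finally, for $i\in k$ each client routes to $k$ whatever it does not send to $q\cap l$; taking expectations and using $E[\,|q\cap l|\mid\text{Case~2}\,]=n\cdot\frac{2^{n-1}}{2^n-1}$ together with the calibration cancels the common $\frac{2^{n-1}}{2^n-1}$ factor and leaves $\bar x_{ij}=\frac{1}{|k|}\bigl(1-n^{-2}\bigr)$.

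The step I expect to be the main obstacle is the self-referential form of Case~2, whose description already invokes the very quantities $\bar x_{ij}$ that the lemma is computing, and the matching normalization factor. The clean way to handle this is to read the normalizer $10/n^2$ as the exact opening probability $\bar y_i$ of a facility in $l$ (the two agree up to a $1-O(1/n)$ factor, and it is this exact choice that makes the expected assignment equal to the target): then the calibration becomes a genuine fixed-point identity and the claimed $\bar x$ satisfies it exactly. The remainder is routine bookkeeping---keeping the conditional expectations under Case~2 distinct from the unconditional ones, and checking that in every outcome the shares assigned to $k$ and to $q\cap l$ sum to the unit demand, so that the expected vector has exactly the stated form.
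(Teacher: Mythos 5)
Your proposal is correct and follows essentially the same route as the paper's own proof: linearity of expectation, conditioning on the two cases, and the observation that $l\cap q$ is a uniformly random nonempty subset of $l$ (your explicit $\tfrac12$ count for $i\in F-k-l$ just fills in what the paper waves off with ``similarly''). Your closing remark about the normalizer is apt rather than a defect --- the paper's proof silently replaces the experiment's stated $10/n^2$ by the exact opening probability $\bar{y}_i$ when computing the expected demand to a facility of $l$, which is precisely the reading you adopt to make the calibration an exact fixed-point identity.
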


\opt{full}{
\begin{proof}
For $i \in k$ we have that $i$ is always open in $\mathcal{D}_{k,l}$
so $\bar{y}_i=P_{\mathcal{D}_{k,l}}[i \mbox{ \small{opened}}]=1$.
 For $i\in l$, note that it is opened only in case $2$ when $i \in q$. The set $l \cap q$ is a randomly selected nonempty subset of $l$. Thus $\bar{y}_{i}=P_{\mathcal{D}_{k,l}}[i \mbox{ \small{opened}}]=P_{\mathcal{D}_{k,l}}[i \in q \mbox{ \small{of case 2}}]=\frac{20}{n^2(1+1/n)}\frac{2^{n-1}}{2^n-1}=\frac{20(2^{n-1})}{n^2(1+1/n)(2^n-1)}$. Similarly for the $\bar{y}$ variables of facilities in $F-(k\cup l)$.  As for the assignment variables, for each $j$ and each facility $i \in l$, each time $i$ is opened it is assigned $\frac{\sum_j \bar{x}_{ij}}{\bar{y}_{i}}$ demand at random and since it is opened a $\bar{y}_i$ fraction of the time, the total expected demand assigned to it is $\frac{\sum_j \bar{x}_{ij}}{\bar{y}_{i}}\bar{y}_{i}=\sum_j \bar{x}_{ij}$. Since the assignments are random each client is assigned to $i$ with the same fraction in expectation, so $P_{\mathcal{D}_{k,l}}[i \mbox{ \small{assigned to} } j]=\bar{x}_{ij}$.
 Facilities in $F-\{k\cup l\}$ are never assigned any demand. By the construction of the distribution the demand not assigned to $l$, is assigned to the facilities in $k$ randomly and so the expected $\bar{x}_{ij}$ have their intended values.
\end{proof}
} 

The  distribution  $\mathcal{D}_{k,l}$ will  be  subsequently used  to
define the members of the  core $\mathcal{C}_I$.   Let 
$\mathcal{E}$ be a subset of  integer variables in the original
space, i.e., $\mathcal{E} \subseteq \{y_1, \ldots, y_{3n}\}.$ 
We denote by $E_{\mathcal{D}_{k,l}}[\mathcal{E}]$ the expectation of
the event  where all  the variables in $\mathcal{E}$  have value $1,$
i.e., the expectation of  the product $\prod_{y_{i_k} \in \mathcal{E}}
y_{i_k}.$ Similarly, we denote
$E_{\mathcal{D}_{k,l}}[\mathcal{E}x_{ij}]$
the {expectation} of
the product $(\prod_{y_{i_k} \in \mathcal{E}} y_{i_k}) \cdot x_{ij}.$  
Let $\chi(case1), \chi(case2)$  be the \zeone random
variables that indicate whether Case 1 and Case 2 
occur, respectively. We denote by $E_{\mathcal{D}_{k,l}}[\mathcal{E}
  \cap case1]$ the expectation of  the product $(\prod_{y_{i_k} \in
  \mathcal{E}}y_{i_k}) \cdot \chi(case1)$ and by
$E_{\mathcal{D}_{k,l}}[\mathcal{E}x_{ij} \cap case1]$ the {expectation} of
the product $(\prod_{y_{i_k} \in \mathcal{E}} y_{i_k}) \cdot x_{ij}
\cdot \chi(case1).$  Similarly for Case 2.
 Intuitively,  $E_{\mathcal{D}_{k,l}}[\mathcal{E}x_{ij} \cap case1]$ is 
the "mass" that $\mathcal{D}_{k,l}$ assigns to $x_{ij}$ over all outcomes
 of case 1 where the variables of $\mathcal{E}$ have value $1$.

To   simplify   notation,   we   use
$z(i)$ instead of $z_{i}$ to refer to a coordinate
of vector $z$ indexed by $i.$ From now on, $P$ denotes the
\cfl\ polytope and $\hat{D}$  its
canonical product relaxation.

\begin{definition}
Fix a set $k \subset F$ of size $n$.
The \emph{core} $\mathcal{C}_I$ 
 of the  instance $I(3n,n^4+1,U,1)$ wrt $\hat{D}$ is  the following
 set of product vectors:  $\forall l  \subset F$  with $|l|  =  n$ and
 $k\cap l =\varnothing$ and for every set $\mathcal{E}$ of integer
 variables   and for every fractional variable $x_{ij}$ 
we define $z_{k,l}(\mathcal{E})=E_{\mathcal{D}_{k,l}}[\mathcal{E}]$
and $z_{k,l}(\mathcal{E}x_{ij}) =
E_{\mathcal{D}_{k,l}}[\mathcal{E}x_{ij}].$  
\end{definition}

Now we  are ready to state the  key Lemma~\ref{mainlemma} from which  
our main theorem
will   be  derived.  
\opt{short}{The proof of the lemma is quite
  technical. A sketch is presented in Subsection~\ref{section:keylemma-sketch}.
}
\opt{app}{
The proof of the lemma is quite
  technical and is deferred to the Appendix. A sketch is presented in Subsection~\ref{section:keylemma-sketch}.
}

\opt{full}{
The    proof   of    the    lemma    is   in    Section
\ref{section:keylemma}.
}

\begin{lemma}\label{mainlemma}
For  any two $z_{k,l},z_{k,l'}\in  \mathcal{C}_I$ such  that $l-l'\neq
\varnothing$         there        is        some         $z        \in
\operatorname{conv}(z_{k,l},z_{k,l'})$    which   is    feasible   for
$\hat{D}$.
\end{lemma}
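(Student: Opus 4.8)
The plan is to argue entirely through the expectation (moment) map $\Phi$ that sends a distribution $\mathcal{D}$ over mixed-integer pseudo-solutions to its product vector, i.e.\ the vector whose coordinates are $E_{\mathcal{D}}[\mathcal{E}]$ and $E_{\mathcal{D}}[\mathcal{E}x_{ij}]$. This map is affine in $\mathcal{D}$, so $z_{k,l}=\Phi(\mathcal{D}_{k,l})$, $z_{k,l'}=\Phi(\mathcal{D}_{k,l'})$, and the midpoint $z:=\tfrac12 z_{k,l}+\tfrac12 z_{k,l'}$ equals $\Phi(\mathcal{M})$ for the mixture $\mathcal{M}=\tfrac12\mathcal{D}_{k,l}+\tfrac12\mathcal{D}_{k,l'}$. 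I will show $z\in\hat{D}$. Since $\hat{D}$ is by definition the set of product vectors of distributions supported on genuinely feasible integer \cfl\ solutions, it suffices to build such a feasible distribution $\mathcal{D}^{\ast}$ with $\Phi(\mathcal{D}^{\ast})=\Phi(\mathcal{M})$.

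The structural point that makes this tractable is that a \emph{mixed} product vector records only the joint opening moments $E[\mathcal{E}]$ and the first-order assignment moments $E[\mathcal{E}x_{ij}]$; there are no products of two assignment variables. Hence $\Phi(\mathcal{M})$ is completely determined by (a)~the distribution over the open facility set $S$ induced by $\mathcal{M}$, which the opening moments fix by M\"obius inversion, and (b)~the conditional assignment $\bar{x}^{S}_{ij}:=E_{\mathcal{M}}[x_{ij}\mid \text{open set}=S]$ for each $S$. I therefore define $\mathcal{D}^{\ast}$ to place, on each open set $S$ in the support, the single solution that opens exactly $S$ and uses the fractional assignment $\bar{x}^{S}$, with probability $\Pr_{\mathcal{M}}[S]$. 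By construction the opening distributions coincide and $\Phi(\mathcal{D}^{\ast})(\mathcal{E}x_{ij})=\sum_{S\supseteq\mathcal{E}}\Pr_{\mathcal{M}}[S]\,\bar{x}^{S}_{ij}=\Phi(\mathcal{M})(\mathcal{E}x_{ij})$, so $\Phi(\mathcal{D}^{\ast})=\Phi(\mathcal{M})$. Each $\bar{x}^{S}$ automatically assigns every client's unit demand and is supported on $S$, so $\mathcal{D}^{\ast}$ is supported on feasible \cfl\ solutions \emph{iff} every $\bar{x}^{S}$ respects the capacity $U$.

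It remains to check capacity-feasibility of each $\bar{x}^{S}$, and this is where the hypothesis $l-l'\neq\varnothing$ enters. Any $S$ fed only by Case~2 outcomes has $\bar{x}^{S}$ equal to a convex combination of capacity-respecting assignments, hence feasible. The only infeasible outcomes of $\mathcal{M}$ are the two Case~1 events, which dump all demand on $k$ and overflow each facility of $k$ by the minuscule amount $\tfrac1n 2^{-n^2}$; they contribute to exactly the open sets $F-l$ and $F-l'$, which are distinct because $l-l'\neq\varnothing$ forces $l\neq l'$. Since $|l|=|l'|=n$, also $l'-l\neq\varnothing$, so Case~2 of $\mathcal{D}_{k,l'}$ is permitted to choose $q=F-k-l$ (which meets $l'$ in $l'-l$), producing open set $F-l$ together with a feasible assignment that routes $\Theta(n^{3})$ units of demand off $k$ and onto $l'-l$. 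Symmetrically, Case~2 of $\mathcal{D}_{k,l}$ feeds feasible mass into the open set $F-l'$. Thus at each dangerous open set the conditional assignment is a blend of the overflowing all-on-$k$ assignment with a strictly capacity-slack feasible one.

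The \emph{main obstacle}, and the only quantitative step, is to verify that this blend stays within capacity at the facilities of $k$: the load there is $U+\tfrac{a\cdot\mathrm{ovf}-b\cdot\mathrm{freed}}{a+b}$, where the overflow is $\mathrm{ovf}=\tfrac1n 2^{-n^2}$ carried with probability $a=\tfrac12(1-p)$, and the diluting Case~2 contribution frees $\mathrm{freed}=\Omega(n^{2})$ units per facility of $k$ with probability $b\ge \tfrac12 p\,2^{-2n}$ (here $p=\tfrac{20}{n^{2}(1+1/n)}$ is the Case~2 probability). Since $b\cdot\mathrm{freed}=\Omega(p\,2^{-2n}n^{2})\gg 2^{-n^2}=a\cdot\mathrm{ovf}\cdot O(n)$, the numerator is negative and the load stays below $U$, so $\bar{x}^{F-l}$ and $\bar{x}^{F-l'}$ are feasible. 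With all $\bar{x}^{S}$ feasible, $\mathcal{D}^{\ast}$ is supported on feasible solutions and $z=\Phi(\mathcal{M})=\Phi(\mathcal{D}^{\ast})\in\hat{D}$, which lies in $\operatorname{conv}(z_{k,l},z_{k,l'})$, proving the lemma.
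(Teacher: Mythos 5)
Your proof is correct, but it takes a genuinely different route from the paper's. The paper never touches the mixture $\tfrac12\mathcal{D}_{k,l}+\tfrac12\mathcal{D}_{k,l'}$ directly: it constructs two \emph{modified} product vectors $z^*_{k,l},z^*_{k,l'}$ by explicitly exchanging the Case-1 mass on a prescribed family of coordinates (those $\mathcal{E}$ meeting $l-l'$ but not $l'$, and symmetrically), then exhibits for each a bespoke feasible distribution $\mathcal{D}^*_{k,l}$ (with Cases $1^*$, $2.a^*$, $2.b^*$) and verifies coordinate-by-coordinate, through a multi-case analysis, that $z^*_{k,l}=E_{\mathcal{D}^*_{k,l}}[\cdot]$; the lemma then follows from $\tfrac12(z^*_{k,l}+z^*_{k,l'})=\tfrac12(z_{k,l}+z_{k,l'})$. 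You instead exploit a structural feature of \emph{mixed} product vectors that the paper does not isolate: since no coordinate involves a product of two assignment variables, the moment vector of any distribution is determined by the open-set marginal together with the conditional assignment averages $\bar{x}^S$, so one may replace the conditional assignment law on each open set by its deterministic average; feasibility of that average reduces, by linearity of the capacity constraints, to a single dilution inequality at the two bad open sets $F-l$ and $F-l'$, where the hypothesis $l-l'\neq\varnothing$ (hence $l'-l\neq\varnothing$) guarantees that Case 2 of the \emph{other} distribution deposits feasible mass of weight at least $\Omega(p\,2^{-2n})$ that overwhelms the $2^{-n^2}$ overflow. Your quantitative check is sound ($U>n^3$, each facility of $q\cap l$ carries $\tfrac{n^3+n^{-1}}{10}<U$, so $\Omega(n^2)$ capacity is freed per facility of $k$, dwarfing the $2^{-n^2}/n$ overflow), and the identification of exactly which outcomes land on $F-l$ is right, since $q=F-k-l$ is illegal for $\mathcal{D}_{k,l}$ but legal for $\mathcal{D}_{k,l'}$. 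What your approach buys is a shorter, more conceptual argument that avoids both the explicit construction of the starred distributions and the case analysis of Proposition \ref{prop-for-lemma}; what it gives up is that it is specific to the mixed setting (it leans on the absence of higher-degree assignment monomials), whereas the paper's mass-exchange template is the one the authors suggest generalizing. One cosmetic slip: in your last sentence it is $z$, not $\hat{D}$, that lies in $\operatorname{conv}(z_{k,l},z_{k,l'})$ (it is their midpoint by construction).
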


\begin{theorem}\label{theorem:mainCFL}
Given  the  family  of \cfl\ 
instances  $I(3n,n^4+1,U,1),$  each member of $\mathcal{C}_I$ is 
$\Omega(n)$-gap inducing and
$\chi(\mathcal{H}(\mathcal{C}_I))=2^{\Omega \left ( n 
  \right )}$. Therefore, there is a constant $c  >0,$ s.t. any $c  N$-approximate 
  EF for \cfl\ with a mixed linear section
has size $2^{\Omega(N)},$ where $N$ is the number of facilities. 
\end{theorem}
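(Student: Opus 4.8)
The plan is to assemble the statement from three pieces: a per-vector $\Omega(n)$ gap, the chromatic number of $\mathcal{H}(\mathcal{C}_I)$ obtained from Lemma~\ref{mainlemma}, and the transfer results of the previous sections.

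First I would establish that each $z_{k,l}$ is $\Omega(n)$-gap inducing. For the fixed pair $(k,l)$ I would exhibit an admissible cost vector $w_{k,l}=(\mathbf{f},\mathbf{c})$, with $\mathbf{c}$ metric, designed so that the fractional point $(\bar y,\bar x)$ of Lemma~\ref{lemma::exp-vector} is cheap while every integral opening is a factor $\Omega(n)$ more expensive. Since objectives in $\tilde O_d$ vanish on the non-singleton product coordinates, $w_{k,l}^{T}z_{k,l}=\sum_i f_i\bar y_i+\sum_{ij}c_{ij}\bar x_{ij}$ is read off directly from Lemma~\ref{lemma::exp-vector}. To lower bound $\text{Opt}_{I,w_{k,l}}$ I would use the tight-capacity relation $nU=(n^4+1)-2^{-n^2}$: $n$ facilities fall short of the total unit demand only by the negligible amount $2^{-n^2}$, yet the cost is arranged so that every integral opening serving the demand costs $\Omega(n)\cdot w_{k,l}^{T}z_{k,l}$. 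This is precisely the $\Omega(n)$ integrality-gap behaviour of the natural \cfl\ relaxation, with $z_{k,l}$ a feasible fractional witness. As a byproduct $z_{k,l}\notin\hat D$: were it a convex combination of integral product vectors $f(x^{(t)})$, then $w_{k,l}^{T}z_{k,l}=\sum_t\lambda_t w_{k,l}^{T}x^{(t)}\ge\text{Opt}_{I,w_{k,l}}$, contradicting the gap. Hence $\mathcal{C}_I$ is a bona fide core of $\rho$-gap inducing vectors for $\rho=\Omega(n)$.

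Next I would compute $\chi(\mathcal{H}(\mathcal{C}_I))$. The core ranges over all $l\subseteq F\setminus k$ with $|l|=n$, so $|\mathcal{C}_I|=\binom{2n}{n}$, and distinct labels give distinct vectors (the $\bar y$-coordinates indexed by $l$ differ). Because $|l|=|l'|=n$, any two distinct labels satisfy $l-l'\neq\varnothing$, so Lemma~\ref{mainlemma} places a point of $\hat D$ in $\operatorname{conv}(z_{k,l},z_{k,l'})$; thus every pair $\{z_{k,l},z_{k,l'}\}$ is a conflicting set, i.e.\ a $2$-element hyperedge. Therefore $\mathcal{H}(\mathcal{C}_I)$ contains the clique on all $\binom{2n}{n}$ vertices, and since no conflicting pair can be monochromatic in a proper coloring, $\chi(\mathcal{H}(\mathcal{C}_I))=\binom{2n}{n}=2^{\Omega(n)}$.

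Finally I would apply the machinery. Theorems~\ref{theorem:meta} and~\ref{theorem:hyper} are stated for \zeone\ sets, so I would first note that their separating-hyperplane/partition argument goes through verbatim for mixed integer sets, with $\hat D$ the canonical mixed product relaxation and \emph{conflicting} defined with respect to it; this yields that any $\Omega(n)$-approximate mixed product relaxation of the \cfl\ polytope has size at least $\chi(\mathcal{H}(\mathcal{C}_I))=2^{\Omega(n)}$. Corollary~\ref{cor:ext2} then transfers the bound to every $\Omega(n)$-approximate extended formulation with a mixed linear section, and writing $N=3n$ gives a constant $c>0$ for which $cN$-approximate such formulations have size $2^{\Omega(N)}$. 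The main obstacle is the gap-inducing step: choosing $w_{k,l}$ and rigorously proving $\text{Opt}_{I,w_{k,l}}=\Omega(n)\,w_{k,l}^{T}z_{k,l}$ is where the real difficulty lies, since once Lemma~\ref{mainlemma} is granted the chromatic-number computation is immediate; stating the mixed-integer version of the meta-theorem is a routine but necessary secondary point.
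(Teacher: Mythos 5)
Your overall architecture is exactly the paper's: a per-vector gap certified by an explicit admissible objective, a clique on $\binom{2n}{n}$ vertices obtained from Lemma~\ref{mainlemma}, and transfer of the bound through the meta-theorem machinery and Corollary~\ref{cor:ext2}. Your side observations are also correct and worth keeping: that $l\neq l'$ with $|l|=|l'|=n$ forces $l-l'\neq\varnothing$ so the lemma applies to every pair, and that the gap itself certifies $z_{k,l}\notin\hat{D}$ (a point the paper leaves implicit but which is needed for $\mathcal{C}_I$ to be a legitimate core). Your remark that Theorems~\ref{theorem:meta} and~\ref{theorem:hyper} must be read in their mixed-integer incarnation is a fair, careful caveat that the paper glosses over.

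The one genuine gap is precisely the step you flag as ``where the real difficulty lies'': you never exhibit $w_{k,l}$ or prove the $\Omega(n)$ gap, and this is essentially the entire content of the paper's proof beyond invoking Lemma~\ref{mainlemma}. The construction is short. Give each facility in $l$ opening cost $1$ and every other facility opening cost $0$; co-locate all of $k\cup l$ and all clients, and place the remaining facilities at distance $2^{n^2}$ from them (this satisfies the relaxed triangle inequality). Any feasible integral solution either opens some facility of $l$, paying at least $1$, or opens none, in which case the open facilities of $k$ have total capacity at most $nU=(n^4+1)-2^{-n^2}$ and at least $2^{-n^2}$ units of demand must travel distance $2^{n^2}$, again costing at least $1$; so $\text{Opt}_{I,w_{k,l}}\geq 1$. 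The fractional point pays only the opening mass on $l$, namely $\sum_{i\in l}\bar{y}_i=n\cdot\frac{20\cdot 2^{n-1}}{n^2(1+1/n)(2^n-1)}=\Theta(1/n)$, since every coordinate $\bar{x}_{ij}$ with $i\notin k\cup l$ is zero and all positive assignments go to co-located facilities. Hence the gap is $\Theta(n)$; note this is exactly where the otherwise unmotivated calibration $(n^4+1)-nU=2^{-n^2}$ in the instance definition is used. With this construction supplied, your proof coincides with the paper's.
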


\begin{proof}
Since we proved  in Lemma \ref{mainlemma} that any  two members of the
core $\mathcal{C}_{I}$  form a conflicting set, $\mathcal{H}(\mathcal{C}_I)$ is a clique and thus its chromatic number is $|\mathcal{C}_{I}|=$${2n}\choose
{n}$$=2^{\Theta(n)}$.  For  each member  of  the  core $z_{k,l}$ there is  an
admissible cost function $w_{k,l}$ inducing $\Theta(n)$ gap: facilities in $l$ have unit opening costs
and every other facility has $0$ opening cost. The facilities in
$k\cup l$ and all the clients are co-located, and the rest of the
facilities are co-located at distance $2^{n^2}$ from the former. Observe
that each feasible mixed integer solution has a cost of at least $1$
since either some facility in $l$ must be opened integrally or at
least $2^{-n^2}$ client demand has to be assigned to some facility in $F-k-l$. On the
other hand the cost of $z_{k,l}$ wrt $w_{k,l}$ is $\Theta(n^{-1})$
since the $(y,x)$ projection of $z_{k,l}$ is the expected vector
$(\bar{y},\bar{x})$ of $\mathcal{D}_{k,l}$. 
\end{proof}

On the other hand, it is easy to see that for every instance $I$ of \cfl\, there is an exact formulation of size $2^{N}p$ where $p$ is a polynomial expression in the size of the instance.
Moreover this formulation can be written as a mixed product relaxation. 
The idea is to simply define a formulation for each choice of the opened facilities and then take the convex hull of those polytopes.
\opt{app}{Please refer to the Appendix for more details.}

\begin{observation}  \label{obs:exact} 
There is an exact mixed product relaxation of the \cfl\ polytope of size
$2^{N}p,$ where 
$p = \Theta (mN),$ $N$ and $m$ being the number of facilities and clients respectively.
\end{observation}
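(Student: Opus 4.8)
The plan is to realize the relaxation explicitly through the disjunctive-programming description of the convex hull of a union of polytopes, and then invoke Theorem~\ref{theorem:ext2} to recast it in the mixed product variables. The starting point is that, once the set of open facilities is fixed, \cfl\ collapses to a transportation problem. For every subset $S \subseteq F$ of the $N$ facilities, let $P_S$ be the polytope in the classic $(y,x)$ space obtained by fixing $y=\mathbf{1}_S$ and letting $x$ range over the feasible assignments for $S$, namely $\sum_i x_{ij}=D_j$ for every client $j$, $\sum_j x_{ij}\le U_i$ for $i\in S$, $x_{ij}=0$ for $i\notin S$, and $x_{ij}\ge 0$. Each $P_S$ is a bounded polytope whose halfspace description uses $\Theta(mN)$ inequalities (dominated by the nonnegativity bounds), and it is empty exactly when the facilities in $S$ cannot serve the total demand. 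Since the mixed integer set underlying \cfl\ is exactly $\bigcup_{S} P_S$ --- a feasible solution opens some set $S$ and fractionally assigns the demand within $P_S$ --- and each $P_S$ is convex, we obtain $P=\operatorname{conv}\bigl(\bigcup_{S} P_S\bigr)$.

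Next I would write down the standard extended formulation of $\operatorname{conv}\bigl(\bigcup_S P_S\bigr)$ from disjunctive programming \cite{BalasCC93}: introduce a multiplier $\lambda_S\ge 0$ and a scaled copy $(y^S,x^S)$ of the classic variables for each nonempty $P_S$, and impose $\sum_S \lambda_S=1$, $(y,x)=\sum_S (y^S,x^S)$, together with the scaled system describing $\lambda_S P_S$. Call this formulation $Q$. Its projection onto $(y,x)$ equals $P$ by the correctness of the construction, and its number of inequalities is $2^N\cdot\Theta(mN)+O(mN)=2^N\cdot\Theta(mN)$, so $p=\Theta(mN)$.

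The only point requiring care is that $Q$ must be an extended formulation \emph{with a mixed linear section}, as demanded by Theorem~\ref{theorem:ext2}. I would exhibit the canonical section: for an integer solution $(y',x')$ with opening pattern $S'=\operatorname{supp}(y')$, set $\lambda_{S'}=1$ and $\lambda_S=0$ for $S\neq S'$, and $(y^{S'},x^{S'})=(y',x')$ with every other copy zero. For a \emph{fixed} integer part $y'$ each auxiliary coordinate is then an affine function of the fractional variables $x$: the multipliers $\lambda_S$ are constants, and $x^S_{ij}$ equals $x_{ij}$ when $S=S'$ and $0$ otherwise. This is precisely the defining property of a mixed linear section, and it holds because the opening pattern is determined by the integer variables while the assignments enter linearly.

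Finally, applying Theorem~\ref{theorem:ext2} to $Q$ produces a translation $T[Q]$ to mixed product relaxations satisfying $P\subseteq \operatorname{proj_{y,x}}(T[Q])\subseteq \operatorname{proj_{y,x}}(Q)=P$, hence an \emph{exact} mixed product relaxation; since a translation never changes the number of inequalities, its size is again $2^N\cdot\Theta(mN)$. I expect the main obstacle to be the bookkeeping that the disjunctive formulation is exactly $P$ rather than a strict relaxation, together with the inequality count; once the mixed-linear-section property above is verified, the passage to a genuine mixed product relaxation is immediate from Theorem~\ref{theorem:ext2}.
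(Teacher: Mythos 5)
Your proposal is correct and follows essentially the same route as the paper: a disjunctive (Balas-type) formulation with one scaled copy of the classic system per subset of facilities, a selection/multiplier variable per subset, the observation that the canonical section is mixed linear (constant multipliers and coordinate-wise affine assignment copies once the integer part is fixed), and a final appeal to Theorem~\ref{theorem:ext2}. The only differences are cosmetic (you make the linking constraints $(y,x)=\sum_S(y^S,x^S)$ and the section explicit, which the paper leaves implicit), so nothing further is needed.
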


\opt{full}{
\begin{proof}
For each choice of opened facilities $O \subseteq F$ consider the following polytope $P^O$:

\begin{align}
x^O_{ij} \leq y^O_i   &&  \forall i \in F, \forall j \in C \\
\sum_{i \in F} x^O_{ij} =1   &&   \forall j \in C \\
0 \leq x^O_{ij} \leq 1  &&  \forall i \in F, \forall j \in C \\
\sum_{j \in C} x^O_{ij} \leq U_i y^O_i   &&   \forall i\in F \\
y^O_i =0   && \forall i \in F-O \\
y^O_i =1   && \forall i \in O   
\end{align}

It obviously an exact formulation of the (possibly empty) polytope when we fix the values of the $y_i$'s wrt $O$. Then by introducing "selection" variables $s^O$ with the meaning of whether the set of opened facilities is $O$ or not we get the following convex combination of polytopes:

\begin{align}
\sum_{O \subseteq F}s^O=1\\
s^O\geq 0 && \forall O \subseteq F\\
x^O_{ij} \leq y^O_i   &&  \forall i \in F, \forall j \in C,\forall O \subseteq F \\
\sum_{i \in F} x^O_{ij} =s^O   &&   \forall j \in C,\forall O \subseteq F \\
0 \leq x^O_{ij} \leq s^O &&  \forall i \in F, \forall j \in C,\forall O \subseteq F \\
\sum_{j \in C} x^O_{ij} \leq U_i y^O_i   &&   \forall i\in F, \forall O \subseteq F \\
y^O_i =0   &&\forall O \subseteq F, \forall i \in F-O \\
y^O_i =s^O   &&\forall O \subseteq F, \forall i \in O   
\end{align}

The proof is concluded by observing that the above linear program has a  mixed linear section and 
by using Theorem \ref{theorem:ext2}. 
\end{proof}
}

\begin{theorem}\label{cor:SA-CFL}
Let $P$  be 
any linear  relaxation 
of the  \cfl\ polytope for the family of instances $I(3n,n^4+1,U,1)$ that uses the  encoding
$\mathcal{N}_{\text{\cfl}}$ and has size $2^{o(n)}.$ 
There is a constant $c >0,$ 
such that for all  $t \leq cn,$ 
the integrality gap of $\opn{SA}^t(P)$    is $\Omega(n).$   
\end{theorem}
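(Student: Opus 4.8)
The plan is to argue by contradiction, exploiting the fact that the level-$t$ Sherali-Adams lift of $P$ is itself a compact extended formulation \emph{with a mixed linear section}, so that the lower bound of Theorem~\ref{theorem:mainCFL} applies to it directly. First I would record the structural observation already made in Section~\ref{sec:prelim}: the linearized system produced by the level-$t$ $\opn{SA}$ procedure applied to $P$, taken \emph{before} the projection to the original variables, is an extended formulation of $\opn{SA}^t(P)$ possessing a mixed linear section. Indeed, at a fixed integer assignment $x'$ of the facility variables the variable standing for a linearized monomial $z_{\mathcal{E}}$ evaluates to the constant $\prod_{y_i\in\mathcal{E}}x'_i$, while $z_{\mathcal{E}w_j}$ evaluates to $(\prod_{y_i\in\mathcal{E}}x'_i)\,w_j$, which is affine in the fractional variables; hence the section is affine in $w$ for each fixed integer part, exactly as the definition of a mixed linear section requires. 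Lifting only the $3n$ integer facility variables yields precisely the mixed-product variable structure underlying Theorem~\ref{theorem:mainCFL}.

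Next I would bound the size of this lift. Since the mixed-integer $\opn{SA}$ procedure multiplies each constraint of $P$ only by products of at most $t$ of the $3n$ facility variables together with their complements, the number of lifted inequalities is at most the number of constraints of $P$ times $\sum_{i=0}^{t}\binom{3n}{i}2^i$, up to $\operatorname{poly}(n)$ factors coming from the standard consistency/box constraints. Using $|P|=2^{o(n)}$ together with the entropy estimate $\sum_{i=0}^{t}\binom{3n}{i}2^i\le \operatorname{poly}(n)\,2^{(3H(t/3n)+t/n)\,n}$ (where $H$ is the binary entropy function, valid as $t/3n\le 1/2$), the lift has size at most $2^{(3H(c/3)+c)n+o(n)}$ whenever $t\le cn$; here I use that $x\mapsto 3H(x/3)+x$ is increasing on $[0,3/2]$, so bounding the exponent at the largest level $t=cn$ suffices.

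Now I would fix the constant. Theorem~\ref{theorem:mainCFL} furnishes a constant $c_0>0$ so that any $c_0N$-approximate EF with a mixed linear section for the family $I(3n,n^4+1,U,1)$ has size at least $\chi(\mathcal{H}(\mathcal{C}_I))=\binom{2n}{n}=2^{(2-o(1))n}$. Since $3H(c/3)+c\to 0$ as $c\to 0^+$, I can choose $c>0$ small enough that $3H(c/3)+c<2$, whence for every $t\le cn$ the $\opn{SA}$ lift has size $2^{(3H(c/3)+c)n+o(n)}<\binom{2n}{n}$ for all large $n$. To close the argument, fix any $t\le cn$ and suppose for contradiction that the integrality gap of $\opn{SA}^t(P)$ is at most $c_0N=\Theta(n)$. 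As $\opn{SA}^t(P)$ is a valid relaxation of the \cfl\ polytope, this makes it a $c_0N$-approximate relaxation, so its $\opn{SA}$ lift is a $c_0N$-approximate EF with a mixed linear section; by Theorem~\ref{theorem:mainCFL} its size is at least $\binom{2n}{n}$, contradicting the previous size bound. Hence the gap of $\opn{SA}^t(P)$ exceeds $c_0N=\Omega(n)$ for every $t\le cn$, which is the claim. (Monotonicity $\opn{SA}^{t+1}(P)\subseteq \opn{SA}^t(P)$ would let me reduce to the single worst level $t=cn$, but this is not actually needed since the size bound holds uniformly for all $t\le cn$.)

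The main obstacle I anticipate is not the counting, which is routine, but making fully precise that the \emph{partial} level-$t$ lift, which introduces only the monomials of degree at most $t$ rather than all $2^{d_x}-1$ product variables, still qualifies as an extended formulation with a mixed linear section of the stated size, so that Theorem~\ref{theorem:mainCFL} (equivalently Corollary~\ref{cor:ext2}) applies verbatim. In particular one must verify that lifting only the integer facility variables produces exactly the mixed-product encoding of that theorem and that the implicit linearization and consistency constraints inflate the inequality count by no more than a $\operatorname{poly}(n)$ factor, so that the exponent $3H(c/3)+c$ can indeed be driven below the $\binom{2n}{n}$ threshold.
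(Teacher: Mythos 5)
Your proof is correct and follows essentially the same route as the paper: bound the number of inequalities of the level-$t$ $\opn{SA}$ lift (before projection) by $2^{o(n)}\cdot 2^{O(t)}$ and play this off against the $\binom{2n}{n}=2^{\Theta(n)}$ lower bound of Theorem~\ref{theorem:mainCFL}. The only real difference is how the truncated (degree-$\le t$) variable space is handled: the paper projects the core $\mathcal{C}_I$ onto the product variables present at level $t$ and reruns the clique argument there, whereas you observe that the lift is an extended formulation with a mixed linear section and pass through Theorem~\ref{theorem:ext2}/Corollary~\ref{cor:ext2} to a full mixed product relaxation of the same size --- which is, if anything, a cleaner justification of the step the paper dispatches in a single sentence.
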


\begin{proof}
Observe that for every level of $\opn{SA}$ there is a suitable
projection of  $\mathcal{C}_{I}$ that yields  a legal core with
respect to  the
product variables used in that level. Therefore, the lower bound on the
size implied by Theorem
 \ref{theorem:mainCFL}  holds at all levels.  
The number of the inequalities of the $t$-level $\opn{SA}$ 
 relaxation  after the lifting and linearization stages, and before
 projection, 
 obtained from  any starting relaxation $P$ of size
$r$ is less than 
$r  \binom{n}{t} 2^{t}.$  By choosing  $t \leq cn,$   with $c$ sufficiently small, we obtain that 
$r  \binom{n}{t}  2^{t} \leq  r2^{\delta n}$ for a small $\delta > 0.$  By Theorem
 \ref{theorem:mainCFL} we get that for this value of $t,$  the integrality gap on the given
family of instances is $\Omega(n).$
This is asymptotically tight since SA is known to produce an exact formulation after $3n$ levels (the number of integer variables).
\end{proof}

We obtain as a  direct consequence a lower bound on the size of formulations 
that use only the classic variables $y_i,$ $x_{ij}.$ 

\begin{corollary}  \label{cor:natural}
Let $P$  be 
any linear  relaxation 
of the  \cfl\ polytope that uses the  encoding
$\mathcal{N}_{\text{\cfl}}$ and has integrality gap  $o(N),$  where $N$ is the number of facilities. Then $P$  has size 
$2^{\Omega(N)}.$ 
\end{corollary}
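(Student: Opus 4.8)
The plan is to recognize \cfl\ formulations in the classic variable space as a degenerate case of the extended formulations already covered by Theorem~\ref{theorem:mainCFL}, and then to convert the integrality-gap hypothesis into the approximation parameter that theorem needs. First I would note that a relaxation $P$ that uses only the classic variables $y_i,x_{ij}$ of the encoding $\mathcal{N}_{\text{\cfl}}$ is an extended formulation of the \cfl\ polytope with \emph{no} auxiliary variables, i.e.\ $d_y=0$. For such a formulation the only section is the identity map on the mixed integer points, $g(x,w)=(x,w)$; this is trivially a mixed linear section, since the (empty) list of auxiliary coordinates is vacuously affine in the fractional part $w$. Hence $P$ falls within the scope of Theorem~\ref{theorem:mainCFL} (equivalently, within the reach of Corollary~\ref{cor:ext2} via Theorem~\ref{theorem:ext2}), with no lifting required.

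Next I would translate the integrality gap into the approximation parameter of that theorem. For a minimization problem the definition of a $\rho$-approximate formulation says precisely that the optimum of $P$ is at least $\rho^{-1}$ times the integer optimum for every admissible objective; equivalently, that the integrality gap of $P$, measured over admissible cost functions, is at most $\rho$. Let $c>0$ be the constant produced by Theorem~\ref{theorem:mainCFL}. Since by hypothesis the gap of $P$ is $o(N)$ and in particular $o(N)$ on the hard family $I(3n,n^4+1,U,1)$ (where $N=3n$), for all sufficiently large $N$ the gap there is at most $cN$; thus $P$, restricted to those instances, is a $cN$-approximate extended formulation with a mixed linear section. Theorem~\ref{theorem:mainCFL} then yields that $P$ has size $2^{\Omega(N)}$, which is the claim; the finitely many small instances are absorbed into the hidden constant.

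An equivalent route, which I would keep as a sanity check, is to take the contrapositive of Theorem~\ref{cor:SA-CFL}: any relaxation in the encoding $\mathcal{N}_{\text{\cfl}}$ of size $2^{o(n)}$ already has $\opn{SA}^t(P)$ of gap $\Omega(n)$ for some $t\le cn$, and since $\opn{SA}^t(P)\subseteq P$ implies that the minimum over $P$ is no larger, the gap of $P$ itself is at least as large; hence gap $o(N)$ forces size $2^{\Omega(n)}=2^{\Omega(N)}$. There is essentially no technical obstacle here, as all the content lives in Theorem~\ref{theorem:mainCFL}. The only two points that genuinely require care are the observations above: that a classic-space relaxation is a legitimate (trivial) EF with a mixed linear section, so the lower bound for mixed product relaxations transfers to it, and that the worst-case integrality gap over \emph{admissible} cost functions is exactly the quantity governed by the parameter $\rho$ in the definition of a $\rho$-approximate formulation.
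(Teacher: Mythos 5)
Your proposal is correct and matches what the paper intends: the corollary is stated there without proof as a ``direct consequence,'' and the intended derivation is exactly your observation that a classic-variable relaxation is a (vacuous, $d_y=0$) extended formulation with a mixed linear section, so Theorem~\ref{theorem:mainCFL} applies once the $o(N)$ integrality-gap hypothesis is converted into $cN$-approximateness on the hard family; your alternative route via the contrapositive of Theorem~\ref{cor:SA-CFL} is an equally valid reading of the same chain of results.
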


\opt{app}{
\subsection{Proof sketch for Lemma \ref{mainlemma}}
\label{section:keylemma-sketch}
}

\opt{full}{
\section{Proof of Lemma \ref{mainlemma}}\label{section:keylemma}
}

\opt{full}{
In the  first part of the proof  we will show that  by exchanging some
measure  of   some  components  of  the   two  product  vectors
$z_{k,l},z_{k,l'}$ of the core, we can construct two new
product  vectors   $z^*_{k,l},z^*_{k,l'}$  each  of   which  is
feasible for $\hat{D}.$ To establish this feasibility we will show for
each of them that it is a convex combination over 
vectors of the form $f(y,x)$ 
where $(y,x)$ are feasible mixed integer solutions in the
\cfl\ polytope $P$ and $f$ is the mixed  product section. 

Consider  the  two  sets  of  facilities $l-l'$  and  $l'-l$.  Clearly
$|l-l'|=|l'-l|>0$, since  $l\neq l'$ and $|l|=|l'|=n$.  We construct a
product vector  $z^*_{k,l}$ based on $z_{k,l}$  and making some
alterations and, symmetrically, a product vector $z^*_{k,l'}$ based
on $z_{k,l'}$.

}

\opt{app}{
In the  first part of the proof  we will show that  by exchanging some
measure  of   some  components  of  the   two  product  vectors
$z_{k,l},z_{k,l'}$ of the core, we can construct two new
product  vectors   $z^*_{k,l},z^*_{k,l'}$  each  of   which  is
feasible for $\hat{D}.$
Consider  the  two  sets  of  facilities $l-l'$  and  $l'-l$.  Clearly
$|l-l'|=|l'-l|>0$, since  $l\neq l'$ and $|l|=|l'|=n$.  We construct a
product vector  $z^*_{k,l}$ based on $z_{k,l}$  and making some
alterations and, symmetrically, a product vector $z^*_{k,l'}$ based
on $z_{k,l'}$. We give below the construction of $z^*_{k,l}$.
}

\centerline{\hrulefill
Construction of  $z^*_{k,l}$ \hrulefill}

For any set $\mathcal{E}$  
containing only facilities from $F-l'$ with at least one from $l-l'$: 
 $z^*_{k,l}(\mathcal{E})=z_{k,l}(\mathcal{E})+E_{\mathcal{D}_{k,l'}}[\mathcal{E}\cap
  case1]$
(Similarly, for  any $i,j$,
$z^*_{k,l}(\mathcal{E}x_{ij})=z_{k,l}(\mathcal{E}x_{ij})+E_{\mathcal{D}_{k,l'}}[\mathcal{E}x_{ij}
  \cap
  case1]$ ). 
In the case set $\mathcal{E}$ contains only facilities from
$F-l$ with at least one from $l'-l$ we have
$z^*_{k,l}(\mathcal{E})=z_{k,l}(\mathcal{E})-E_{\mathcal{D}_{k,l}}[\mathcal{E}\cap
  case1]$. 
(Similarly, for  any $i,j$,
$z^*_{k,l}(\mathcal{E}x_{ij})=z_{k,l}(\mathcal{E}x_{ij}) - 
E_{\mathcal{D}_{k,l}}[\mathcal{E}x_{ij}
  \cap
  case1]$ ). 
In any other case and for any $i,j$ 
let $z^*_{k,l}(\mathcal{E})=z_{k,l}(\mathcal{E})$ and
$z^*_{k,l}(\mathcal{E}x_{ij})=z_{k,l}(\mathcal{E}x_{ij}).$ \hrulefill

\opt{app}{

Next we show, and this is by far the most complicated part 
of the proof,  that the constructed $z^*_{k,l}$ and $z^*_{k,l'}$ are
indeed the  expected 
vectors of distributions $\mathcal{D}^*_{k,l}$ and
$\mathcal{D}^*_{k,l'},$ respectively, over feasible mixed integer
product solutions. 
In the last step of the proof  we show the following, which is an easy
consequence of the construction of $z^*_{k,l}$  and  $z^*_{k,l'}.$ 
\begin{claim}
 $1/2(z^*_{k,l}+z^*_{k,l'}) \in \operatorname{conv}(z_{k,l},z_{k,l'})$. 
\end{claim}
This last claim completes the 
 proof of Lemma \ref{mainlemma}. 
}

\opt{full}{
\centerline{\hrulefill
Construction of  $z^*_{k,l'}$ \hrulefill}

The construction of $z^*_{k,l'}$ is symmetric but we give the details for the sake of completeness. 
For any set $\mathcal{E}$  
containing only facilities from $F-l$ with at least one from $l'-l$: 
 $z^*_{k,l'}(\mathcal{E})=z_{k,l'}(\mathcal{E})+E_{\mathcal{D}_{k,l}}[\mathcal{E}\cap
  case1]$
(Similarly, for  any $i,j$,
$z^*_{k,l'}(\mathcal{E}x_{ij})=z_{k,l'}(\mathcal{E}x_{ij})+E_{\mathcal{D}_{k,l}}[\mathcal{E}x_{ij}
  \cap
  case1]$ ). 
In the case set $\mathcal{E}$ contains only facilities from
$F-l'$ with at least one from $l-l'$ we have
$z^*_{k,l'}(\mathcal{E})=z_{k,l'}(\mathcal{E})-E_{\mathcal{D}_{k,l'}}[\mathcal{E}\cap
  case1]$. 
(Similarly, for  any $i,j$, 
$z^*_{k,l'}(\mathcal{E}x_{ij})=z_{k,l'}(\mathcal{E}x_{ij})-E_{\mathcal{D}_{k,l'}}
[\mathcal{E}x_{ij}\cap
  case1]$.)
In any other case and for any $i,j$ 
let $z^*_{k,l'}(\mathcal{E})=z_{k,l'}(\mathcal{E})$ and
$z^*_{k,l'}(\mathcal{E}x_{ij})=z_{k,l'}(\mathcal{E}x_{ij}).$ \hrulefill

Next we show that the constructed $z^*_{k,l}$ and $z^*_{k,l'}$ are
indeed the  expected 
vectors of distributions $\mathcal{D}^*_{k,l}$ and
$\mathcal{D}^*_{k,l'},$ respectively, over feasible mixed integer product solutions. We will only give the proof for $z^*_{k,l}$ since the other case is similar. 

Before we  continue the proof, we first explain the  intuition behind
the   construction above.  Both   $z_{k,l}$  and  $z_{k,l'}$  are  not
derived from distributions over  feasible solutions  because in any
such feasible solution at least  one facility
from $l$ and $l'$ respectively   has to be  opened and
assigned  some demand.  By assigning  demand only  to the  set of
facilities in $k$ we cannot  satisfy the total demand without violating
the capacities. 
This is actually the main difference between the distributions
$\mathcal{D}^*_{k,l}$ and $\mathcal{D}_{k,l}.$ 
In Case $1^*$ there is at least
one  from $l$   opened but,  if we  were to  explain
$z_{k,l}$ and  $z_{k,l'}$ as resulting distributions over feasible
solutions,  
the total measure  of the
opening of  facilities in  $l$ and $l'$  respectively is too  small to
have some facility from any of those sets  opened 100\% of the time when Case
$1$ happens.  So in the construction  of $z^*_{k,l}$ we  will have 
all 
the facilities in $l-l'$ opened in Case $1^*$ and in the construction of
$z^*_{k,l'}$ we will have all  the facilities in $l'-l$ opened in Case
$1^*$. Since those  facilities are now opened a  greater fraction of the
time,  many  events  involving  them  will  have  their  probabilities
increased.  But  where  did  we  find  the  measure  to  increase  the
probability of those events? We construct  $z^*_{k,l}$ ``from'' $z_{k,l}$
by increasing  the probability of  those events in $z^*_{k,l}$  by the
same amount that we decrease  their probability in the construction of
$z^*_{k,l'}$ from $z_{k,l'}.$ Similarly we increase the probability
of those events in the construction of $z^*_{k,l'}$ from $z_{k,l'}$ by
the same amount that we decrease their probability in the construction
of $z^*_{k,l}$ from  $z_{k,l}$. If we prove the  validity of the above
and  moreover  prove  that  $\operatorname{conv}(z^*_{k,l},z^*_{k,l'})
\cap  \operatorname{conv}(z_{k,l},z_{k,l'}) \neq \varnothing$,  then 
Lemma~\ref{mainlemma} follows.

\begin{claim}\label{claim-for-lemma}
There is a distribution $\mathcal{D}^*_{k,l}$ over  mixed integer
product vectors which are feasible for $\hat{D}$ such that for
any $\mathcal{E}$ and any $i,$ $j$, $z^*_{k,l}(\mathcal{E})=E_{\mathcal{D}^*_{k,l}}[\mathcal{E}]$ and $z^*_{k,l}(\mathcal{E}x_{ij})=E_{\mathcal{D}^*_{k,l}}[\mathcal{E}x_{ij}]$.
\end{claim}

\begin{proof}
 Let $\mathcal{D}^*_{k,l}$ be the distribution defined by the
 following experiment. The vector $(\bar{y}, \bar{x})$ is the one
 defined in  Lemma~\ref{lemma::exp-vector}.\\

Facilities in $k$ are always opened.

\noi  {\it Case} $1^*$

 With probability $1-\frac{20}{n^2(1+1/n)}$ 
all facilities  in $F-l'$ are opened - all the facilities in $l'$ are closed. Evenly assign client demand to facilities in $k$ so each one takes exactly  $U$, and assign the remaining $2^{-n^2}$ demand evenly to the facilities in $l-l'$.

\noi  {\it Case} $2^*$

With  probability $\frac{20}{n^2(1+1/n)}$ pick at random a subset $q$ of the facilities in $F-k$ with at least one facility from $l$ and open them.

\noi  {\it Case} $2.a^*$

 If $q\neq F-k-l'$ assign randomly demand to facilities in $q\cap l$ so that each one of them takes $\frac{\sum_j \bar{x}_{ij}}{y_i}$ demand and the rest of the demand is equally distributed to the facilities in $k$. 

\noi  {\it Case} $2.b^*$
 
Otherwise,  when  $q= F-k-l'$, assign randomly a quantity $\frac{\sum_j \bar{x}_{ij}d}{\bar{y}_{i}}-2^{-n^2}\frac{P[\chi(case1^*)]}{P[\chi(case 2.b^*)]}$ to each of the facilities in $l-l'$ and assign the remaining demand evenly to the facilities in $k$.

 It is easy to see that  $\mathcal{D}^*_{k,l}$ is a distribution over feasible solutions for the instance 

\begin{proposition}\label{obvious}
Each outcome of the experiment defining $\mathcal{D}^*_{k,l}$ is a feasible solution to the instance.
\end{proposition}

\begin{proof}
In every outcome of the probabilistic experiment which induces the distribution $\mathcal{D}^{*}_{k,l}$ all the client demand is assigned to the opened facilities. It remains to show that the capacities $U$ are respected. Consider case $1$: the capacities of the facilities in $k$ are respected by construction (recall that in this case those facilities are saturated) and facilities in $l-l'$ share a total demand of $2^{-n^2} < U$. Now consider case $2$: a facility $i \in q$ is assigned at most $\frac{\sum_j \bar{x}_{ij}}{10/n^2}<U$ demand and since the rest of the demand is equally distributed among the facilities in $k$, each facility in $k$ takes at most (equality when $q=F-k-l'$)  
$\frac{(n^4+1)-[(|l-l'|)\frac{\sum_j \bar{x}_{ij}}{\bar{y}_{i}}-(2^{-n^2})\frac{E[\chi(case1^*)]}{E[\chi(case2.b^*)]}]}{|k|} < U$.
\end{proof}

To get Claim \ref{claim-for-lemma}, we prove the following proposition.

\begin{proposition}\label{prop-for-lemma}
For every set $\mathcal{E}$ and any $i,$ $j,$ we have that 
$z^*_{k,l}(\mathcal{E})=E_{\mathcal{D}^*_{k,l}}[\mathcal{E}]$ and 
$z^*_{k,l}(\mathcal{E}x_{ij})=E_{\mathcal{D}^*_{k,l}}[\mathcal{E}x_{ij}]$. 
\end{proposition}

\begin{proof}
Consider the following cases.

Case A.1

Assume that 
 set $\mathcal{E}$ contains facilities only from $F-l'$ with at least one from $l-l'$.
We have that $z^*_{k,l}(\mathcal{E})=z_{k,l}(\mathcal{E})+E_{\mathcal{D}_{k,l'}}[\mathcal{E}\cap case1]$. By the definition of $z_{k,l}(\mathcal{E})$ we have $z_{k,l}(\mathcal{E})=E_{\mathcal{D}_{k,l}}[\mathcal{E}]=E_{\mathcal{D}_{k,l}}[\mathcal{E}\cap case2]$ since in case $1$ of $\mathcal{D}_{k,l}$ none of the facilities in $l$ are opened. We also have that $E_{\mathcal{D}_{k,l}}[\mathcal{E}\cap case2]=E_{\mathcal{D}^*_{k,l}}[\mathcal{E}\cap case2^*]$
 since no assignment variable  appears in $\mathcal{E}$  and all the
 other elements of the experiments of cases $2$ and $2^*$ 
 induce the exact same distribution. We also have that $E_{\mathcal{D}_{k,l'}}[\mathcal{E}\cap case1]=E_{\mathcal{D}^*_{k,l}}[\mathcal{E}\cap case1^*]$ by the fact that when case $1$ happens in $\mathcal{D}_{z_{k,l'}}$ the facilities in $l-l'$ are opened 100\% and the same happens in $\mathcal{D}^*_{k,l}$, while the $2$ distributions agree on everything except the assignments in that case  by construction (recall again that no assignment variable appears in $\mathcal{E}$). So we have $z^*_{k,l}(\mathcal{E})=z_{k,l}(\mathcal{E})+E_{\mathcal{D}_{k,l'}}[\mathcal{E}\cap case1]=E_{\mathcal{D}^*_{k,l}}[\mathcal{E}\cap case2^*]+E_{\mathcal{D}^*_{k,l}}[\mathcal{E}\cap case1^*]=E_{\mathcal{D}^*_{k,l}}[\mathcal{E}]$ since cases $1^*$ and $2^*$ partition the probability space.

Case A.2.a

       Consider the case where $\mathcal{E}$ contains facilities only
       from $F-l'$ with at least one from $l-l'$ and let $x_{ij}$ be
       an assignment to some  facility $i \in k.$  We have that \\

$z^*_{k,l}(\mathcal{E}x_{ij})=z_{k,l}(\mathcal{E}x_{ij})+E_{\mathcal{D}_{k,l'}}[\mathcal{E}x_{ij}\cap case1]=$\\
$E_{\mathcal{D}_{k,l}}[\mathcal{E}x_{ij}\cap case2]+E_{\mathcal{D}_{k,l'}}[\mathcal{E}x_{ij}\cap case1]$(none of the facilities in $l$ are opened in case $1$ of $\mathcal{D}_{k,l}$).\\

Note that in case $1$ of $\mathcal{D}_{k,l}$ all the client demand is assigned to the facilities in $k$
while in case $1$ of $\mathcal{D}^*_{k,l}$ the demand assigned to facilities in $k$ is the total demand minus $2^{-n^2}$, which is assigned to the facilities in $l-l'$. Also note that assignments in $k$ are done evenly for all clients. Thus the last equation yields:\\

$(E_{\mathcal{D}^*_{k,l}}[\mathcal{E}x_{ij}\cap case2^*]-p)+(E_{\mathcal{D}^*_{k,l'}}[\mathcal{E}x_{ij}\cap case1^*]+p)=$(here $p=2^{-n^2}E_{\mathcal{D}_{k,l}}[\chi(case1)]/|k||C|$)\\
$E_{\mathcal{D}^*_{k,l}}[\mathcal{E}x_{ij}].$\\

Case A.2.b

Now, if set $\mathcal{E}$ contains facilities only from $F-l'$ and $x_{ij}$ is an assignment variable of $l-l'$ then, again,  $z^*_{k,l}(\mathcal{E}x_{ij})=z_{k,l}(\mathcal{E}x_{ij})+E_{\mathcal{D}_{k,l'}}[\mathcal{E}x_{ij}\cap case1]$. But now $E_{\mathcal{D}_{k,l'}}[\mathcal{E}x_{ij}\cap case1]=0$ since none of $l-l'$ are assigned any demand when case $1$ happens in $\mathcal{D}_{k,l'}$. From the definition of the core we have $z_{k,l}(\mathcal{E}x_{ij})=E_{\mathcal{D}_{k,l}}[\mathcal{E}x_{ij}]=E_{\mathcal{D}_{k,l}}[\mathcal{E}x_{ij}\cap case2]$ (in case 1 all assignments to $l$ are zero).  We have:



$z_{k,l}(\mathcal{E}x_{ij})=E_{\mathcal{D}_{k,l}}[\mathcal{E}x_{ij}\cap case2].$\\

Note that in case $2.b*$ of $\mathcal{D}^*_{k,l}$ the total demand assigned to $i$ is $2^{-n^2}\frac{P[\chi(case1^*)]}{P[\chi(case 2.b^*)]}$ less than the total demand assigned to it in the corresponding wrt to $q$ case of $\mathcal{D}_{k,l}$. Thus, again by symmetry of the assignments, the last expression is equal to:\\

$E_{\mathcal{D}^*_{k,l}}[\mathcal{E}x_{ij}\cap case2^*]+r=$(here $r=2^{-n^2} E_{\mathcal{D}_{k,l}}[\chi(case1)]/|l-l'||C|$)\\
$E_{\mathcal{D}^*_{k,l}}[\mathcal{E}x_{ij}].$\\
So once again $z^*_{k,l}(\mathcal{E}x_{ij})=E_{\mathcal{D}^*_{k,l}}[\mathcal{E}x_{ij}]$. 

Case B

Consider the case where $\mathcal{E}$ contains facilities in $l'-l$ and so  $z^*_{k,l}(\mathcal{E})=z_{k,l}(\mathcal{E})-E_{\mathcal{D}_{k,l}}[\mathcal{E}\cap case1]$. By definition of the core $z_{k,l}(\mathcal{E})=E_{\mathcal{D}_{k,l}}[\mathcal{E}]$. So
$z^*_{k,l}(\mathcal{E})=E_{\mathcal{D}_{k,l}}[\mathcal{E}\cap case2]$. On the other hand,  $E_{\mathcal{D}^*_{k,l}}[\mathcal{E}]=E_{\mathcal{D}^*_{k,l}}[\mathcal{E}\cap case1^*]+E_{\mathcal{D}^*_{k,l}}[\mathcal{E}\cap case2^*]=E_{\mathcal{D}^*_{k,l}}[\mathcal{E}\cap case2^*]$ since $E_{\mathcal{D}^*_{k,l}}[\mathcal{E}\cap case1^*]=0$ (the facilities in $l'-l$ are always closed in case $1$ of $\mathcal{D}^*_{k,l}$). Case $2$ of $\mathcal{D}_{k,l}$ and case $2^*$ of $\mathcal{D}^*_{k,l}$ differ only on the case where $q=F-k-l'$. Since $\mathcal{E}$ contains facilities in $l'-l$ it cannot be the case $q=F-k-l'$. 
So we have that $E_{\mathcal{D}_{k,l}}[\mathcal{E}\cap case2]=E_{\mathcal{D}^*_{k,l}}[\mathcal{E}\cap case2^*]$. So once again we have  $z^*_{k,l}(\mathcal{E})=E_{\mathcal{D}^*_{k,l}}[\mathcal{E}]$. The exact same arguments are valid in case we consider $\mathcal{E}x_{ij}$ for any assignment variable $x_{ij}$.

Case C

For every other set $\mathcal{E}$ and any $(i,j)$ we  have $z^*_{k,l}(\mathcal{E})=z_{k,l}(\mathcal{E})=E_{\mathcal{D}_{k,l}}[\mathcal{E}]$($z^*_{k,l}(\mathcal{E}x_{ij})=z_{k,l}(\mathcal{E}x_{ij})=E_{\mathcal{D}_{k,l}}[\mathcal{E}x_{ij}]$) which is equal to $E_{\mathcal{D}^*_{k,l}}[\mathcal{E}]$($E_{\mathcal{D}^*_{k,l}}[\mathcal{E}x_{ij}]$)  by construction of the distributions $\mathcal{D}^*_{k,l}$ and $\mathcal{D}^*_{k,l}$.

The proof of Proposition \ref{prop-for-lemma} is complete.
\end{proof}

To complete the proof of the claim note that $\mathcal{D}^*_{k,l}$ can also be seen as a distribution over product vectors by associating each mixed integer outcome $(y,x)$ of the experiment with the product vector $f(y,x)$ -- recall that $f(y,x)$ is the mixed product section.  Observe that the expectations $E_{\mathcal{D}^*_{k,l}}[\mathcal{E}]$ and $E_{\mathcal{D}^*_{k,l}}[\mathcal{E}x_{ij}]$ are exactly the expectations of the corresponding components of the product vectors $f(y,x)$.

The proof of Claim \ref{claim-for-lemma} is complete.
\end{proof}

Finally, we show the following

\begin{claim}
 $1/2(z^*_{k,l}+z^*_{k,l'}) \in \operatorname{conv}(z_{k,l},z_{k,l'})$. 
\end{claim}

\begin{proof}
We shall actually show that $1/2(z^*_{k,l}+z^*_{k,l'}) = 1/2(z_{k,l}+z_{k,l'})$. 
Let  $\mathcal{E}$ be a  set containing facilities only from $F-l'$ with at least one from $l-l'$.
Then 
$z^*_{k,l}(\mathcal{E})+z^*_{k,l'}(\mathcal{E})=z_{k,l}(\mathcal{E})+ E_{\mathcal{D}_{k,l'}}[\mathcal{E}\cap case1] + z_{k,l'}(\mathcal{E})- E_{\mathcal{D}_{k,l'}}[\mathcal{E}\cap case1]=z_{k,l}(\mathcal{E})+z_{k,l'}(\mathcal{E})$.

Let $\mathcal{E}$ be a  set containing facilities only from $F-l$ with at least one from $l'-l$. We have 
$z^*_{k,l}(\mathcal{E})+z^*_{k,l'}(\mathcal{E})=z_{k,l}(\mathcal{E})- E_{\mathcal{D}_{k,l}}[\mathcal{E}\cap case1] + z_{k,l'}(\mathcal{E})+ E_{\mathcal{D}_{k,l}}[\mathcal{E}\cap case1]=z_{k,l}(\mathcal{E})+z_{k,l'}(\mathcal{E})$.


In the remaining cases for the set $\mathcal{E}$ we simply have $z^*_{k,l}(\mathcal{E})+z^*_{k,l'}(\mathcal{E})=z_{k,l}(\mathcal{E})+z_{k,l'}(\mathcal{E})$
by construction. The exact same arguments are valid in case we consider $\mathcal{E}x_{ij}$ for any assignment variable $x_{ij}$.
\end{proof}

The proof of Lemma \ref{mainlemma} is complete.

}      

\section{Discussion}

In the proof of our result for \cfl\ we provided a core whose underlying hypergraph
is actually a simple graph and moreover a clique. For other problems,  especially for  \zeone  polytopes, 
we believe that the power of general hypergraphs needs to
be exploited,  if one wishes to derive a tight bound on 
the extension complexity. 
Observe that our methodology requires  only the existence of a
suitable core, and thus,  one could possibly employ
probabilistic arguments to prove the existence of suitable hypergraphs of
high chromatic number.

In the case of mixed integer polytopes, we believe that the mixed 
product relaxations can be shown to be strong enough to simulate any
extended formulation, as is the case for  product
formulations and \zeone polytopes.

\subparagraph*{Acknowledgements}
We thank the anonymous reviewers of an earlier version for  valuable  comments.

\bibliography{../../bibliography-ver2}

\end{document}